\documentclass[11pt]{article}   	                 
\usepackage{amsmath,amsthm, amssymb,amsfonts,amscd,mathabx}

\usepackage[colorlinks=true,linkcolor=blue,citecolor=red]{hyperref}
\usepackage[pdftex]{graphicx}                 
\usepackage{wrapfig}                               
\usepackage[font=footnotesize]{caption}                     
\usepackage{float}
\usepackage[symbol]{footmisc}
\counterwithin{figure}{section}
\usepackage{subcaption}

\let\OLDitemize\itemize
\renewcommand\itemize{\OLDitemize\addtolength{\itemsep}{-5pt}}

\numberwithin{equation}{section}
\newtheorem{theorem}{Theorem}[section]

\newtheorem{lemma}[theorem]{Lemma}
\newtheorem{proposition}[theorem]{Proposition}
\newtheorem{definition}[theorem]{Definition}

{\begin{trivlist}\item[]\textbf{Proof#1 }}%
{\qed\end{trivlist}}

 {\begin{trivlist}\item[]\textbf{Acknowledgments }}{\end{trivlist}}

\newcommand{\R}{\mathbb{R}}

\newcommand{\N}{\mathbb{N}}
\newcommand{\Z}{\mathbb{Z}}

\newcommand{\Rmnum}[1]{\uppercase\expandafter{\romannumeral #1\relax}}

\newcommand{\rmO}{\mathrm{O}}

\newcommand{\rmd}{\mathrm{d}}
\newcommand{\rme}{\mathrm{e}}

\renewcommand{\leq}{\leqslant}
\renewcommand{\geq}{\geqslant}

\def\eps{\varepsilon}

\newfam\bifam
\font\tenbi=cmmib10 scaled \magstep1 \font\sevenbi=cmmib10 at 11pt
\font\fivebi=cmmib10 at 6pt \textfont\bifam = \tenbi
\scriptfont\bifam = \sevenbi \scriptscriptfont\bifam= \fivebi

\DeclareMathOperator{\floor}{floor}

\DeclareMathOperator\erf{erf}

\begin{document}

\begin{center}
{\fontsize{14}{14}\fontfamily{cmr}\fontseries{b}\selectfont{
Selection of Marginally Stable Patterns by Boundary Conditions
\footnote[4]{
The authors acknowledges partial support through grant NSF-DMS-2506837
}}}\\[0.2in]
Jacob Daniels$^{1}$, Simon Eskin$^{2}$, Tel Fahrendholz$^{1}$, and  Arnd Scheel$^{1}$\\[0.1in]
\textit{\footnotesize
$^1$School of Mathematics, University of Minnesota, Minneapolis, 206 Church St SE, MN 55455, USA\\
$^2$   Department of Mathematics, Applied Mathematics, and Statistics, Case Western Reserve University,
10900 Euclid Avenue, Cleveland, OH  44106, USA\\
}
\end{center}

\begin{abstract}
\noindent
    We study the effect of boundary conditions on the selection of wavenumbers in pattern-forming systems. Prior work established that, for a large class of boundary conditions, equilibria only within a narrow band of wavenumbers are compatible with the boundary. We demonstrate here that for a large class of initial conditions, solutions evolve towards an extremal member of this narrow band of equilibria. In fact, this extremal member is neutrally stable, nonlinearly unstable, and the convergence is realized through a series of snapping events at the boundary. We observe this behavior in simulations of the Swift-Hohenberg equation when the pattern forming region is confined to a large spatial interval through a parameter quench. We prove the convergence in a phase-diffusion approximation using comparison principles and astutely chosen subsolutions that mimic the snapping behavior. 
\end{abstract}
\section{Introduction}
Patterns forming in domains with a large aspect ratio usually lead to complex dynamics with many stable and unstable equilibria. In the limit of an unbounded domain, for us here the real line $x\in\R$, patterns come typically in two-parameter families, parameterized by a phase and a wavenumber. In typical examples, equilibria are of the form $u(x)\sim u_\mathrm{per}(kx-\varphi;k)$, $u_\mathrm{per}(\xi+2\pi;k)=u_\mathrm{per}(\xi;k)$, with     wavenumber $k$ in an admissible band $k\in[k_-,k_+]$ and phase $\varphi\in\R/(2\pi\Z)$. While typically not all those patterns are stable, restricting to stable patterns commonly still allows for a band $[k_{-,\mathrm{s}},k_{+,\mathrm{s}}]\subsetneq [k_-,k_+] $ of wavenumbers. As in many other complex systems, one then would wish to predict which of these patterns is in fact \emph{selected}, when restricting to a particular class of initial conditions. 

\smallskip\textbf{Displacement-strain relations.} In real systems, these perfect periodic patterns are disturbed by boundaries and impurities in the medium. Our focus here is on the effect of boundaries. In large bounded systems, $|x|<L$, the number of equilibria typically grows linearly in $L$. A conceptual approach to describing  equilibria in large domains was proposed and carried out in several examples in~\cite{MorisseyScheel}, starting with the description of equilibria in semi-bounded domains $x\in(0,\infty)$ or $x\in(-\infty,0)$. It was shown there that equilibria can be understood as exponentially localized boundary layers that converge to a periodic pattern $u_\mathrm{per}(kx-\varphi;k)$ far away from the boundary, where $\varphi$ and $k$ satisfy a \emph{displacement-strain relation}. In many situations, this relation takes the simple form $k=d(\varphi)$ where $d$ is $2\pi$-periodic.

\smallskip\textbf{Swift-Hohenberg with parameter quench.} A somewhat universal example are systems near a Turing instability, when the driving parameter is small positive in a half space $\{x>0\}$ and has the opposite sign in the complement $\{x<0\}$. 
\begin{figure}[h!]
    \centering
    \includegraphics[width=0.48\textwidth]{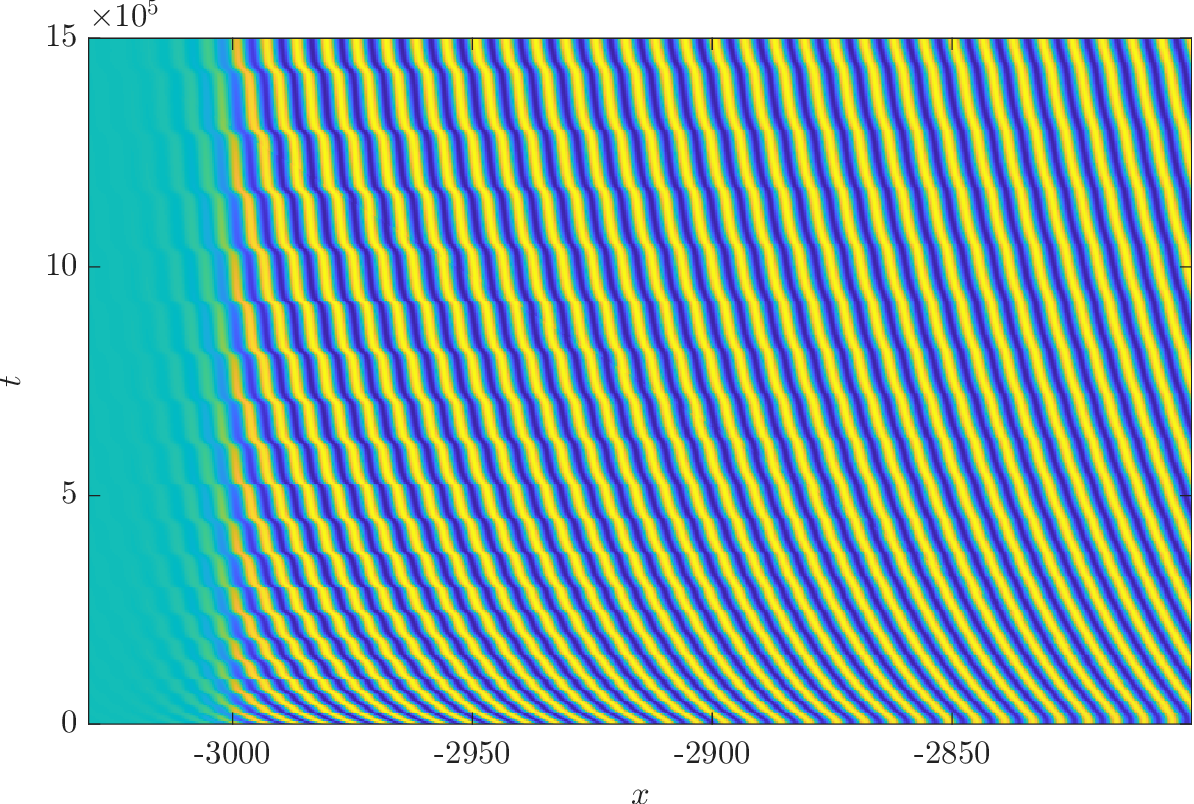}\hfill  \includegraphics[width=0.48\textwidth]{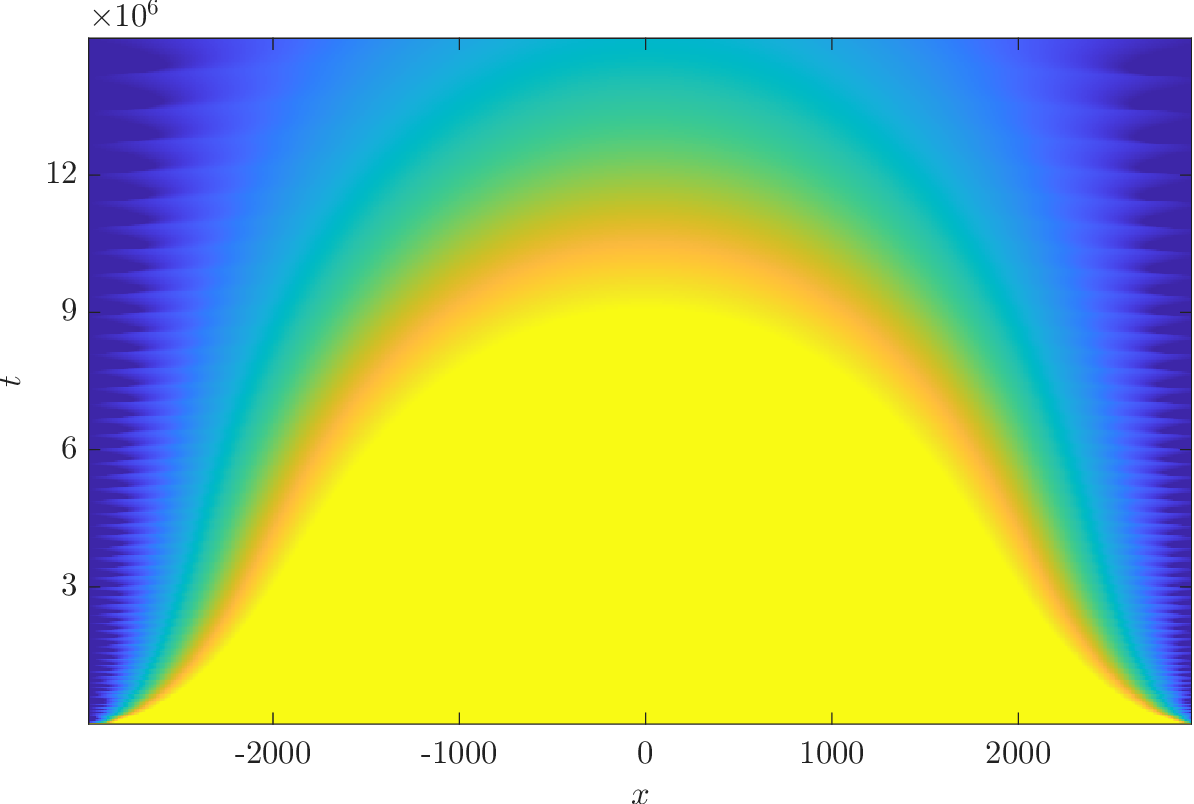}\hfill  \\
    \includegraphics[width=1\textwidth]{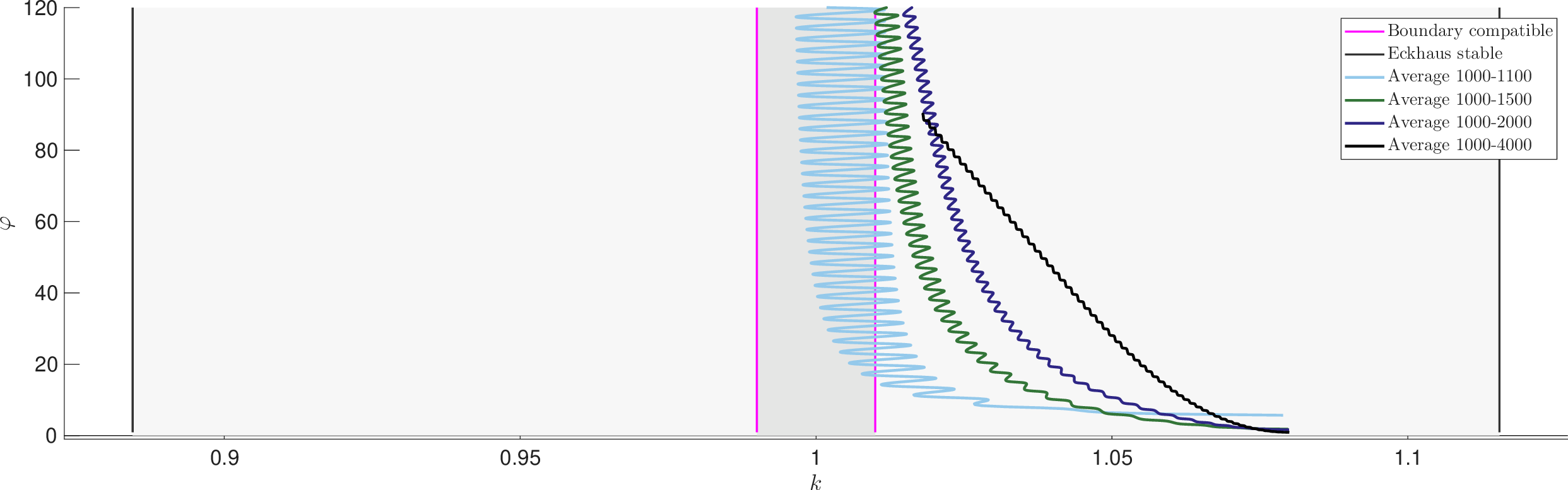} 
    \caption{Direct simulations of the Swift-Hohenberg equation~\eqref{e:SH} with parameters and initial conditions as in the text. Part of the space-time region is shown (top left) to demonstrate the snapping at the boundary, eliminating stripes, and thus reducing the overall wavenumber. Snapping events are spaced at a scaling $t_n\sim n^2$, slowing down as time progresses. We illustrate both snapping and the diffusive spreading of a corrected wavenumber into the bulk of the domain by plotting the derivative of the effective phase (see text) $k=\varphi_x$ (top right). The convergence to the marginally stable wavenumber is illustrated by plotting  average phase and wavenumber in a ``phase plane'' as functions of time (bottom). Solutions converge to the maximal wavenumber $k_\mathrm{max}\sim 1+\eps^2/16=1.01$ as time progresses. Convergence is faster when averaging over a smaller neighborhood of the boundary as opposed to the entire domain, but snapping events have a more significant impact on the average in small regions.   }\label{f:sh}
\end{figure}
To be specific, consider the Swift-Hohenberg equation with a parameter quench, 
\begin{equation}\label{e:SH0}
    \partial_t u = - (\partial_{xx}+1)^2 u + \eps^2 \mathrm{sign}(x) u - u^3,\qquad x\in \R,
\end{equation}
so that patterns form in $\{x>0\}$ and are suppressed in $\{x<0\}$. Results from~\cite{weinburd} show that for small $\eps$, the displacement-strain relations is of the form $k=1+\frac{\eps^2}{16}\cos(2\varphi)+\rmO(\eps^3)$, that is, there exists a family of solutions $u_*$ with $u_*(x)\to 0$ as $x\to -\infty$ and $|u_*(x)-u_\mathrm{per}(kx-\varphi;k)|\to 0 $ as $x\to\infty$ when $k=d(\varphi)$. Here, $u_\mathrm{per}(kx;k)\sim \frac{2\eps}{\sqrt{3(1-4\kappa^2}}\cos((1+\kappa\eps)x)$, $k=1+\kappa\eps$, $|\kappa|<\frac{1}{2}$.  In particular, the parameter step restricts the band of wavenumbers to $k\in[1-\frac{\eps^2}{16},1+\frac{\eps^2}{16}]$. This interval is significantly smaller than the interval of stable wavenumbers, $k\in[k_{-,\mathrm{s}},k_{+,\mathrm{s}}]=[1-\frac{\eps}{2\sqrt{3}},1+\frac{\eps}{2\sqrt{3}}]$. Results in~\cite{MorisseyScheel} show that patterns undergo an exchange of stability at the extrema. More specifically, there is a sign $\iota\in\{\pm1\}$ that can be computed in specific examples such that equilibria where $\iota d'(\varphi)>0$ are stable and equilibria with $\iota d'(\varphi) <0$ are unstable. The pattern with maximal or minimal $k$ is, in this respect, marginally stable. 

The same type of displacement-strain relation also occurs when the Swift-Hohenberg equation is considered on $x>0$ with boundary conditions at $x=0$, 
\[
    \partial_t u = - (\partial_{xx}+1)^2 u + \eps^2  u - u^3,\qquad x>0,\qquad \partial_{xx}u+u=0,\qquad \partial_{xxx}u+\partial_x u=0,\qquad x=0. 
\]
These somewhat specific boundary conditions in fact arise naturally as boundary conditions when  minimizing the energy  
\[
\mathcal{E}(u)=\int_\R \left(\left((\partial_{xx}+1)u\right)^2-\eps^2 u^2 +\frac{1}{2}u^4\right)\rmd x.
\]
over all of $H^2(\R_+)$. 

\medskip\textbf{Selection of marginal states.}
We studied numerically the prototypical example of the Swift-Hohenberg equation in a large domain, 
\begin{equation}\label{e:SH}
    \partial_t u = - (\partial_{xx}+1)^2 u + \eps^2 \mu_L(x) u - u^3,\qquad x\in (-L,L),\quad L=4000, 
\end{equation}
with pattern forming region confined to the center of the domain, 
\[
 \mu_L(x)=\left\{\begin{array}{ll} 
 1,&|x|<3L/4,\\
 -1,&|x|>3L/4,
 \end{array}\right.
\]
periodic boundary conditions, and driving parameter $\eps=0.4$. This leads to pattern formation in the central region $|x|<3000$ and values $u$ close to zero in the complement. As initial condition we chose $u_0(x)=\frac{2\eps}{\sqrt{3}}\sqrt{1-4 \kappa^2}\cos((1+\eps\kappa)x)$ with $\kappa=0.2$, thus detuning from the center of the band of wavenumbers. We simulated the equation using a smoothed version of the sign function and $2^{16}$ Fourier modes in a spectral discretization, and an exponential time differencing scheme \textsc{ETDRK4} with $dt=0.1$. We simulated the system until a final time of $1.5\cdot 10^6$. We then extracted an effective phase $\varphi$ from the solution by computing $\arctan(u_x/u)$, which we smoothed with a Gaussian blur of width $2\pi$.  

The simulations show clearly how the solution converges to the equilibrium with a wavenumber that is maximal among wavenumbers compatible with the parameter step; see Figure~\ref{f:sh}. The fact that this equilibrium is unstable is seen also in the fact that solutions repeatedly snap. The equilibrium is of saddle-node type in an appropriate sense~\cite{MR4135828} and snapping events follow the unstable manifold. 

Our main result establishes this weak convergence of wavenumbers to the marginally stable wavenumber at the edge of the compatible band, in a reduced, somewhat universal model that we shall introduce next. 

\begin{figure}[h!]
    \centering
    \includegraphics[width=0.48\textwidth]{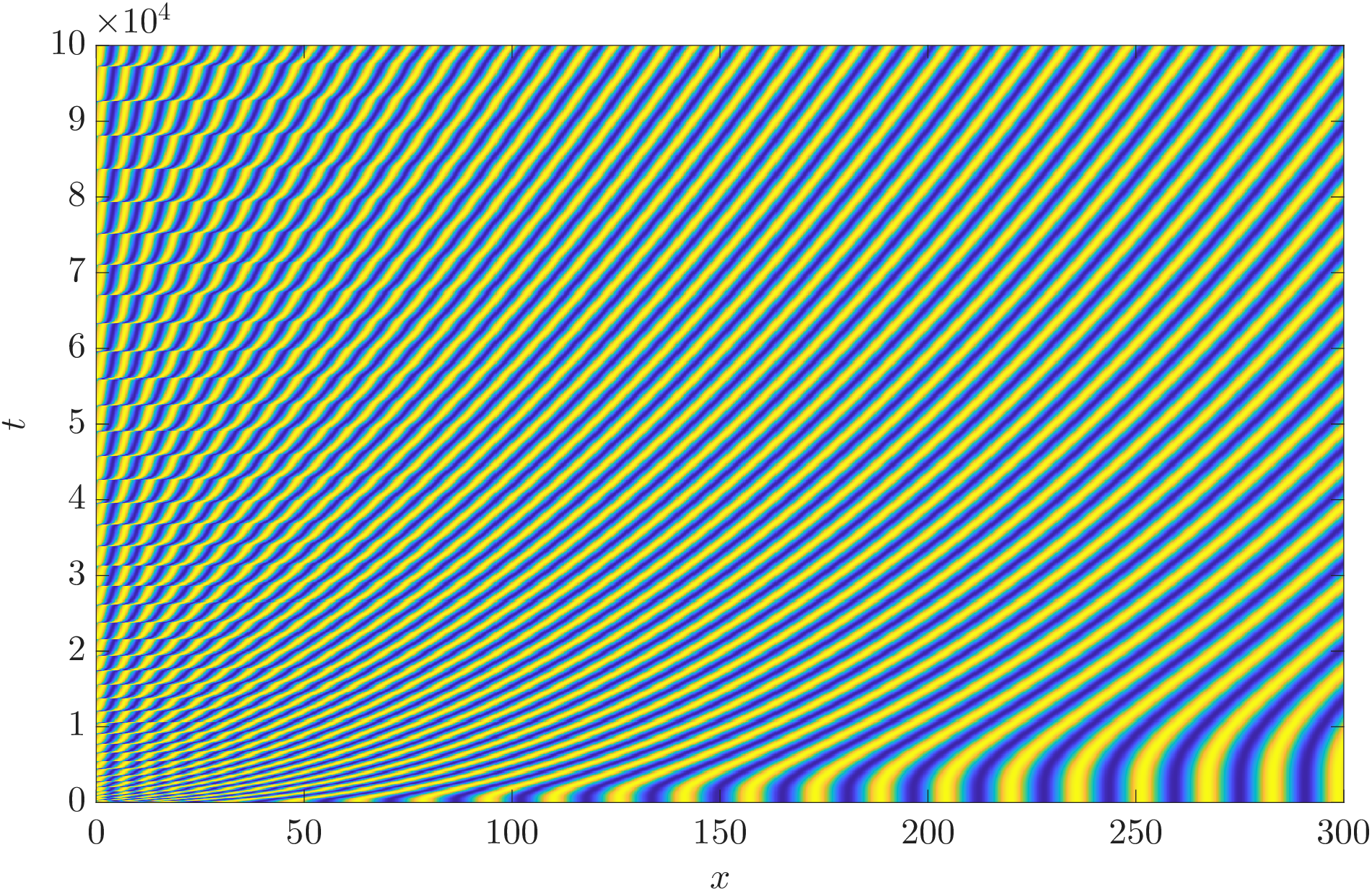}\hfill   \includegraphics[width=0.48\textwidth]{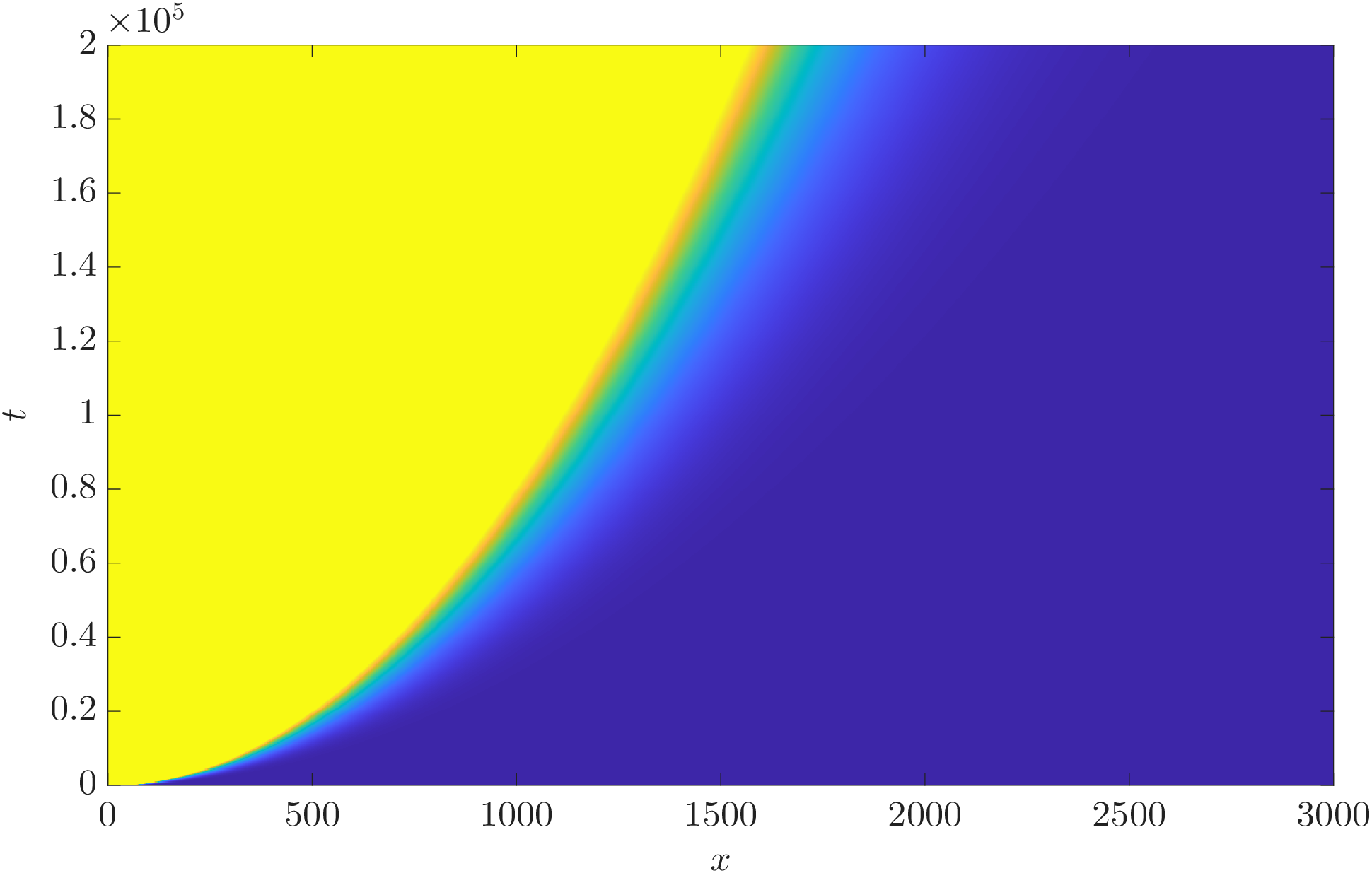} \hfill \\
    \includegraphics[width=1\textwidth]{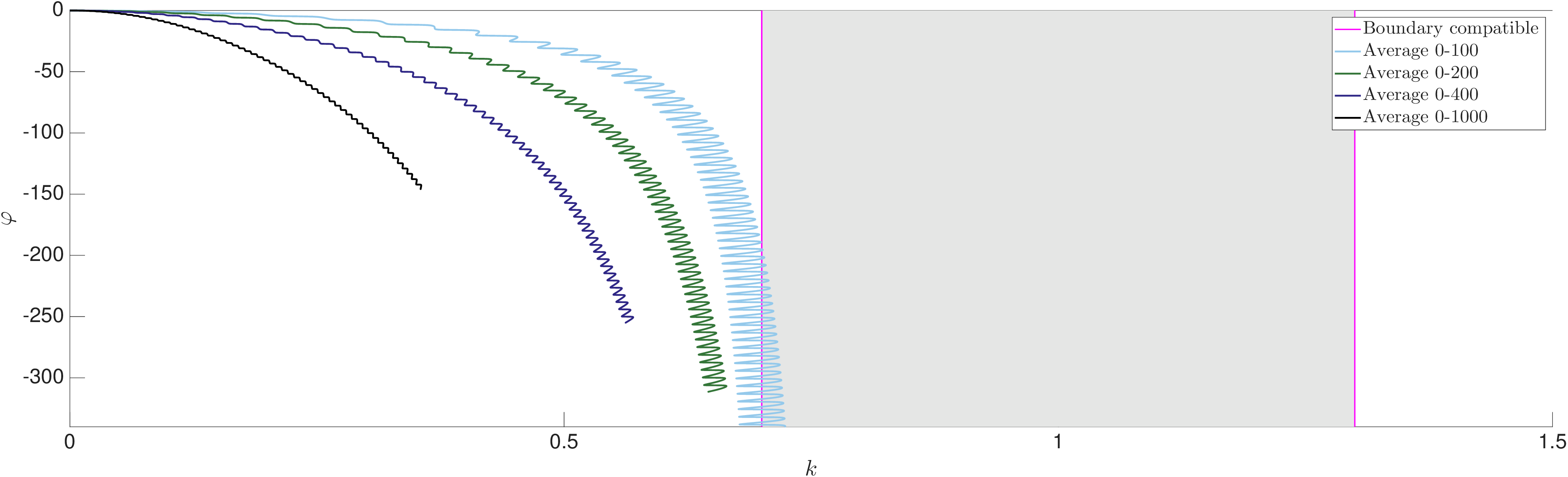} \\
    \caption{Direct numerical simulations of the phase diffusion equation~\eqref{PD} for $\mu=0.3$ and $\varphi_0(x)=0$ with Neumann boundary conditions at $x=3000$. The space-time plot (top left) shows $\cos(0.4 x + \varphi(t,x))$ in part of the simulated space-time region and demonstrates the same snapping behavior at the boundary that was observed in Figure~\ref{f:sh}. Also shown (top right) is the space-time plot of $\partial_x \varphi$ which reproduces half of the phase plot in Figure~\ref{f:sh}. The convergence to the marginally stable wavenumber is illustrated via a ``phase plane'' plot (bottom), showing how average slopes approach the left edge of the range of admissible slopes, through repeated snapping events at which the phase at the boundary decreases by $2\pi$. We again see that convergence is faster and snapping events are more pronounced when looking at  averages over smaller regions.}
    \label{f:pd}
\end{figure}

\medskip\textbf{The phase-diffusion approximation.} 
On the real line, one can modulate the phase in $u_\mathrm{per}(kx-\varphi;k)$ slowly with $\varphi=\varphi(\delta x,\delta^2 t)$ and obtain, adjusting the wavenumber accordingly to improve the approximation, at leading order a modulation equation for the phase, 
\begin{equation}\label{e:pd}
    \partial_t\varphi=d_\mathrm{eff}\partial_{xx}\varphi.
\end{equation}
Here $d_\mathrm{eff}$ can be obtained from a linear stability analysis of the pattern with wavenumber $k$ near which the modulation is carried out. 

A derivation of this equation in the presence of boundary conditions is not known. Nevertheless, positing~\eqref{e:pd} as a model for pattern dynamics by thinking of the pattern as given by $\cos(\varphi(t,x))$, we would infer that the correct displacement-strain relation in this case is $\varphi_x=1+\frac{\eps^2}{16}\sin(2\varphi)$. The diffusion equation~\eqref{e:pd} together with a simple scaling of this boundary condition, is the equation that we shall study in this paper. It is given by
\begin{equation}\label{PD}
  \left\{\begin{array}{ll}
    \partial_t \varphi = \partial_{xx} \varphi, &t,x>0,\\
    \partial_x \varphi(t,0) = 1 + \mu \sin(\varphi(t,0)), & t>0,  \\
    \varphi(0,x) = \varphi_0(x), & x>0,
\end{array}\right.
\end{equation}
where $\varphi \in \R/(2\pi\Z)$ and $0< \mu < 1$. We immediately see how equilibria are given by affine lines of the form $\varphi(x)=ax+b$ where $a=1+\mu \sin(b)$, thus mimicking the displacement-strain relation in the Swift-Hohenberg equation.

In summary, (local) slopes $\varphi_x$ of solutions $\varphi$ to~\eqref{PD} correspond to (local)  wavenumbers of solutions to the Swift-Hohenberg equation~\eqref{e:SH0}. For equilibria, the boundary condition in~\eqref{PD} confines slopes, and thus wavenumbers in~\eqref{e:SH0}, to $[1-\mu, 1+\mu]$.

\medskip\textbf{Stability, instability, and marginal stability.}
In the phase-diffusion approximation, stability and instability of equilibria can be readily seen explicitly. Indeed, linearizing at a profile $\varphi_*(x)=ax+b$ satisfying $a=1+\mu \sin(b)$, we find 
\[
\partial_t\varphi = \partial_{xx}\varphi, \quad x>0,\qquad \partial_x\varphi = (\mu\cos b)\varphi, \quad x=0.
\]
This equation is unstable precisely when there exists a localized eigenfunction to a positive eigenvalue $\lambda=\gamma^2>0$ with eigenfunction $\rme^{-\gamma x}$, that is, when $\gamma=\mu\cos b  >0$. For $\mu>0$, this amounts to $b\in (-\pi/2,\pi/2)$. Extremal wavenumbers $\partial_x \varphi=1\pm\mu$ correspond to $b\in \{-\pi/2,\pi/2\}$ and are therefore marginally stable.

\medskip\textbf{Main result: selection of marginally stable slopes.}
Our main result states that initial conditions with slopes $\partial_x\varphi_0(x)<1-\mu$, corresponding to wavenumbers that are too small to be compatible with the boundary condition, will evolve so that the slope approaches the slope of the extremal equilibrium in a weak sense. 
\begin{theorem}\label{t:main}
    Let $\varphi(t,x)$ be a solution to~\eqref{PD} with initial condition $\varphi_0(x)=\alpha x + \beta$. 
    \begin{enumerate}
        \item If $\alpha<1-\mu$, then for all $\varepsilon>0$ there exists $T>0$ such that 
            \begin{equation*}
                \Big|\partial_x\varphi\big(t,\sqrt{\varepsilon t}\big)-\big(1-\mu\big)\Big|\lesssim\sqrt{\eps} \quad \text{for all } t>T.
            \end{equation*}
        \item If $\alpha>1+\mu$, then for all $\varepsilon>0$ there exists $T>0$ such that 
            \begin{equation*}
                \Big|\partial_x\varphi\big(t,\sqrt{\varepsilon t}\big)-\big(1+\mu\big)\Big|\lesssim\sqrt{\eps} \quad \text{for all } t>T.
            \end{equation*}
    \end{enumerate}
\end{theorem}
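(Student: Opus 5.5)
We plan to prove part~1; part~2 follows by the symmetry $\varphi\mapsto-\varphi$, $x\mapsto x$ combined with a shift $\varphi\mapsto\varphi+\pi$, which maps the boundary condition $1+\mu\sin\varphi$ to itself and interchanges the roles of $1-\mu$ and $1+\mu$ after reversing orientation. More precisely, we would handle it by redoing the argument with the inequalities reversed. The basic tool is the comparison principle for~\eqref{PD}. The boundary condition $\partial_x\varphi=f(\varphi)$ with $f(\varphi)=1+\mu\sin\varphi$ Lipschitz gives an order-preserving flow on lifts $\varphi:\R_+\to\R$. A function $\underline\varphi$ is a subsolution if $\partial_t\underline\varphi\le\partial_{xx}\underline\varphi$ and $\partial_x\underline\varphi(t,0)\ge f(\underline\varphi(t,0))$; note that the sign at the boundary is reversed relative to the interior because the outward normal points in the $-x$ direction. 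Supersolutions satisfy the reverse inequalities. We also use the comparison principle for $w=\partial_x\varphi$. Since $w$ solves the heat equation, spatial slopes are controlled directly in the interior. Both inequalities $w\lesssim 1-\mu+\sqrt\eps$ and $w\gtrsim 1-\mu-\sqrt\eps$ at $x=\sqrt{\eps t}$ will be obtained from bounds on $\varphi$ itself, combined with interior parabolic regularity, which converts $C^0$ closeness on parabolic scales into $C^1$ closeness.

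\textbf{Step 1: the self-similar profile.} Write $\varphi(t,x)=(1-\mu)x+\psi(t,x)$. The key observation is that $f(\varphi)\ge1-\mu$, so $\partial_x\psi(t,0)\ge0$. Hence the boundary acts like a flux condition with nonnegative flux, and the flux vanishes only when $\varphi(t,0)\in-\pi/2+2\pi\Z$. Initially $\psi_0=(\alpha-1+\mu)x+\beta$ has negative slope. The heuristic is that the boundary drives the phase at $x=0$ down in $2\pi$-snaps, as in Figure~\ref{f:pd}, and on average the boundary behaves like a Neumann condition for $\psi$. The Neumann solution with initial slope $c=\alpha-1+\mu<0$ is
\[
\psi_N(t,x)=\beta+c\bigl(x\,\mathrm{erf}(x/2\sqrt t)+2\sqrt{t/\pi}\,\rme^{-x^2/4t}\bigr),
\]
whose slope at $x=\sqrt{\eps t}$ is $c\,\mathrm{erf}(\sqrt\eps/2)=\rmO(\sqrt\eps)$. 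This is exactly the claimed error, and it suggests showing that $\psi$ stays within $o(\sqrt t)$ of $\psi_N$ in sup norm on $\{x\le C\sqrt t\}$.

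\textbf{Step 2: a supersolution (upper bound).} Since $\partial_x\psi(t,0)\ge0$, the function $\psi_N+2\pi$ (or any Neumann solution lying above the initial data) is a supersolution. To see this, let $\overline\varphi=(1-\mu)x+\psi_N+C$. Its boundary slope is $1-\mu\le f(\overline\varphi)$, which is the correct inequality for a supersolution. This gives $\psi\le\psi_N+C$ immediately.

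\textbf{Step 3: a subsolution (lower bound), the main obstacle.} We need $\psi\ge\psi_N-o(\sqrt t)$. The difficulty is that the boundary may hold $\varphi(t,0)$ away from $-\pi/2$ and inject positive flux, and a fixed Neumann solution is not a subsolution because $f(\underline\varphi)>1-\mu$ generically. We would construct a subsolution that mimics snapping. Set
\[
\underline\varphi=(1-\mu)x+\psi_N(t,x)-\sum_{n\le N(t)}\Phi(t-t_n,x),
\]
where each $\Phi(s,x)$ is a heat-equation profile that deposits a drop of $2\pi$ at the boundary. For instance, take $\Phi=2\pi\,\mathrm{erfc}(x/2\sqrt{s})$ smoothed near $s=0$, which has steep negative slope at the boundary. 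The snapping times $t_n\sim n^2$ are chosen so that between snaps the boundary value of $\underline\varphi$ stays in a window where $\partial_x\underline\varphi\ge f(\underline\varphi)$. This holds when the window lies near $\varphi(t,0)\in[-\pi/2-\delta,\,3\pi/2]$ just after a drop, because the fresh $\mathrm{erfc}$ front makes $\partial_x\underline\varphi$ large ($\ge 1+\mu$ suffices). As the front decays, the boundary value must sit near $-\pi/2$ where $f\approx1-\mu$, and there the small positive slope deficit $\mu(1+\sin\varphi)\approx\mu\delta^2/2$ has to be absorbed by an extra slowly varying term of size $\delta^2\sqrt t$.

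Summing the $N(t)\sim\sqrt t$ profiles gives a total contribution that, in the diffusive scaling, is $-2\pi N(t)$ spread like the Neumann correction. The rate $t_n\sim n^2$ is forced by matching the average drop rate to $\partial_t\psi_N(t,0)=c/\sqrt{\pi t}$. We expect the resulting bound to be $\underline\varphi\ge(1-\mu)x+\psi_N-\rmO(\delta^2\sqrt t+1)$. Choosing $\delta$ small, parabolic regularity then gives $|\partial_x\varphi(t,\sqrt{\eps t})-(1-\mu)|\lesssim|c|\,\mathrm{erf}(\sqrt\eps/2)+\rmO(\delta^2)+o(1)$, and taking $\delta^2\lesssim\sqrt\eps$ and $T$ large yields the claim. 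Verifying the boundary inequality uniformly across all snapping phases is the delicate part. If the sum of $\mathrm{erfc}$ profiles proves unwieldy, we would instead use a simpler piecewise subsolution restarted at each $t_n$, again relying on the comparison principle.
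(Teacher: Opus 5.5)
Your Steps 1 and 2 agree with the paper: after its reduction to $\alpha=0$ (subtract $\alpha x$, then rescale), the paper's supersolution is exactly your $(1-\mu)x+\psi_N$. The genuine gap is in Step 3, which carries the whole proof. At $x=0$ your ansatz gives $\underline\varphi(t,0)=\beta+2c\sqrt{t/\pi}-2\pi N(t)$. It keeps the continuous Neumann drift $\psi_N(t,0)$ \emph{and} subtracts the snaps on top of it, so the phase drop at the boundary is counted twice. This causes the following failures.
\begin{itemize}
\item At the rate you prescribe (snaps matched to $\partial_t\psi_N(t,0)$), $2\pi N(t)\approx 2|c|\sqrt{t/\pi}$. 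So the gap to your supersolution at $x=0$ is of order $\sqrt t$ with an $\rmO(1)$ coefficient. Your expected bound $\underline\varphi\ge(1-\mu)x+\psi_N-\rmO(\delta^2\sqrt t+1)$ is false for this $\underline\varphi$, and the construction yields no slope control.
\item The boundary value is never parked near $-\pi/2$, as your narrative requires. At the matched rate it drifts continuously through a full $2\pi$ on each $[t_n,t_{n+1}]$, in addition to the jump.
\item On $(0,t_1)$ your boundary slope is exactly $1-\mu$, while $\underline\varphi(t,0)$ is strictly decreasing, so $\partial_x\underline\varphi\ge f(\underline\varphi)$ fails there.
\item Rarer snaps would shrink the gap. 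But then the snap flux $2\sqrt\pi\sum(t-t_n)^{-1/2}$ becomes too small to dominate $\mu(1+\sin\underline\varphi(t,0))$, while the boundary value keeps sweeping through $\pi/2$. So no choice of $t_n$ rescues this ansatz.
\end{itemize}

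The missing idea is that the snaps must \emph{replace} the Neumann drift, not be added to it. The paper prescribes the subsolution's boundary value as Dirichlet data, namely the staircase $-2\pi\lfloor\frac{1-\mu+\eps}{\pi^{3/2}}\sqrt t\rfloor-\frac{\pi}{2}-2\pi M$. This is pinned at $-\frac{\pi}{2}+2\pi\Z$, where $f=1-\mu$, except on short linear transitions of length $\delta$. The boundary flux then comes only from the initial offset and the accumulated snaps:
\[
\frac{\pi/2+2\pi M+\beta}{\sqrt{\pi t}}+2\sqrt\pi\sum_{s_k\le t}(t-s_k)^{-1/2}.
\]
In your variables, pinning makes the background contribute flux $c<0$ instead of $0$, and the snaps must supply all of $|c|$. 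Making this work needs three quantitative ingredients:
\begin{itemize}
\item A Riemann-sum lemma shows that the flux just before each snap tends to $1-\mu+\eps$ (for $\alpha=0$). It approaches this limit from below, so one must snap slightly faster than the matched rate. This margin is exactly what creates the $\frac{2\eps}{\sqrt\pi}\sqrt t$ boundary gap.
\item A large offset $M$ covers the early times, before the asymptotics take over.
\item A monotonicity check with $\delta<1/\pi$ handles the transitions.
\end{itemize}

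Two smaller slips. First, a sup-norm gap $\eta\sqrt t$ gives a gradient error $\eta/\bar x$ at $x=\bar x\sqrt t$, not $\rmO(\eta)$. With $\bar x=\sqrt\eps$ you therefore need $\eta\lesssim\eps$, not $\delta^2\lesssim\sqrt\eps$. Second, $\varphi\mapsto-\varphi$ (with or without a shift by $\pi$) sends $1+\mu\sin\varphi$ to $-(1\mp\mu\sin\varphi)$, not to itself. Your fallback of redoing the argument works, or you can reflect, subtract $\alpha x$, and rescale, as the paper does.
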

We illustrate the result in direct simulations, shown in Figure~\ref{f:pd}, which show a remarkable agreement with the dynamics in the Swift-Hohenberg equation, Figure~\ref{f:sh}. 

\begin{figure}
    \includegraphics[width=\linewidth]{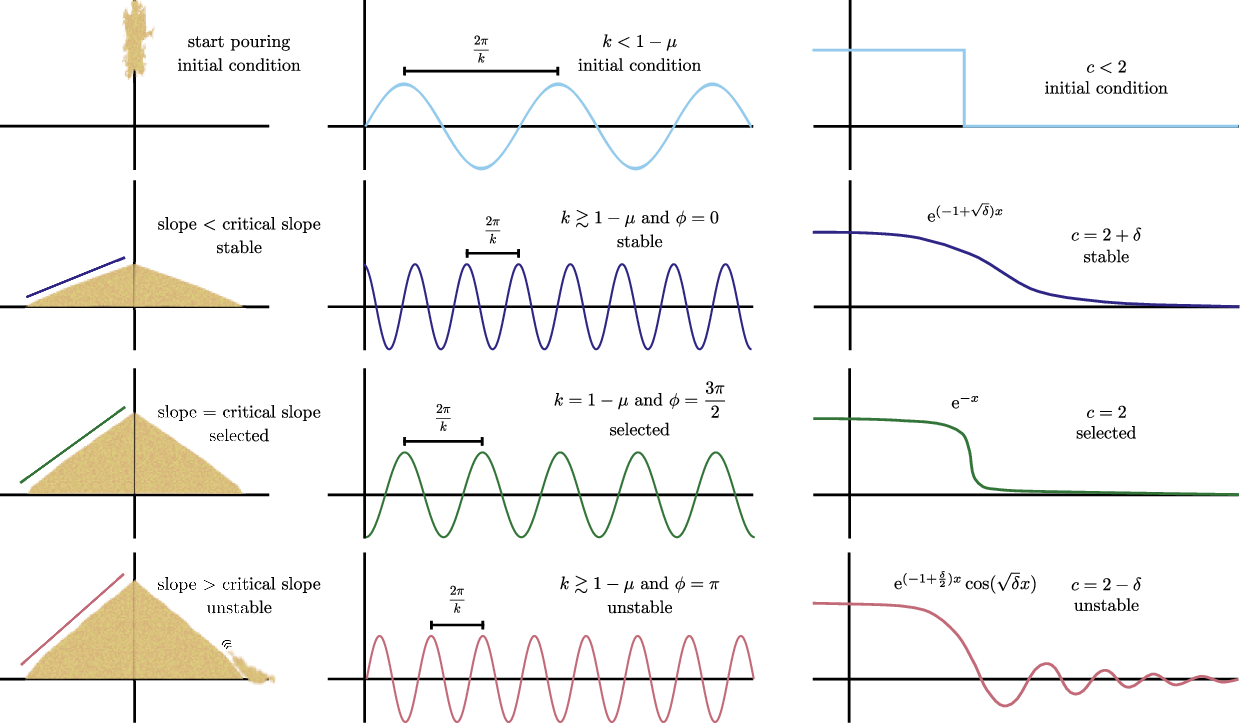}
    \caption{Schematic illustration suggesting similarities between the selection of marginally stable states in the sand pile (left), front invasion (right), and patterns with boundary (center, present work); see text for details.    }
    \label{f:soc}
\end{figure}

\medskip\textbf{Self-organized criticality.}
We conclude this introduction with some comments on the selection of a marginally stable state; see Figure~\ref{f:soc}. The idea that dynamical systems often select such critical state was laid out and corroborated in~\cite{PhysRevLett.59.381}. The resulting fragility of the final state, being only marginally stable and hence susceptible to perturbations, leads to wide-ranging predictions of statistics in complex systems~\cite{PhysRevLett.59.381,OByrne2022}. However, specific instances where the convergence to critical, marginally stable states has been established are rare. Besides the original Abelian sandpile model~\cite{DHAR19994}, we mention here the selection of fronts in the context of propagation into unstable state; see~\cite{avery2026selectionmechanismsinvasion,vanSaarloosReview} for reviews. In the sandpile, stable slopes exist for a range of angles, yet the steepest and least stable angle is selected. Similarly, among all invasion fronts in say the Fisher-KPP equation, the steepest front, that is, the front with the largest exponential decay rate, is selected when initial conditions possess an even steeper decay rate.  In our case, the wavenumber of patterns plays the role of a slope or exponential decay rate in these two examples, and initial conditions with a smaller (or larger) wavenumber converge to an extremal wavenumber with a marginally stable profile. 

\medskip\textbf{Outline.} We collect the relevant background on comparison principles and obtain upper bounds on solutions when $\varphi_0=\beta$ in \S\ref{s:3}. We construct subsolutions in the same setting using boundary data that mimics the snapping at the boundary in \S\ref{s:4}. We derive estimates for the solution from boundary data and wrap up the proof of Theorem~\ref{t:main} in \S\ref{s:5}, and conclude by presenting a summary with further questions in \S\ref{s:dis}.

\section{Comparison principles and supersolutions}\label{s:3}
The proof of Theorem~\ref{t:main} relies on bounding the solution by  sub- and  supersolutions. 

\medskip\textbf{Comparison principles.}
We will begin with a definition of these terms.
\begin{definition}[Sub- and supersolutions]\label{d:sssol}
    Let $T>0$ and define $D_T = (0,T] \times (0,\infty)$. A function $\varphi \in C(\overline{D_T}) \cap C^2(D_T)$ is called a supersolution if
        \begin{equation}\label{d:sup-crit}
            \begin{cases} 
                \partial_t\varphi \geq \partial_{xx} \varphi, & (t,x) \in D_T,\\    
                \partial_x\varphi(t,0) \leq 1+\mu\sin(\varphi(t,0)), & t \in (0,T],\\
                \varphi(0,x) \geq \varphi_0(x), & x \in (0,\infty).
            \end{cases}
        \end{equation}
    Similarly a function $\varphi \in C(\overline{D_T}) \cap C^2(D_T)$ is called a subsolution if it satisfies all of the reversed inequalities above.
\end{definition}
Sub- and supersolutions yield barriers for actual solutions by the following standard result; see~\cite[Theorem 4.1.1]{PaoNonlinear}.
\begin{proposition}[]Parabolic comparison]\label{t:cc}
    Let $T>0$ be fixed and $\varphi$ be the unique solution to~\eqref{PD}. Suppose $\varphi_{\mathrm{sub}}$ and $\varphi_{\mathrm{sup}}$ are sub- and supersolutions in $D_T$, as defined in Definition~\ref{d:sssol}. Then
    \[ 
        \varphi_{\mathrm{sub}} \leq \varphi \leq \varphi_{\mathrm{sup}} \quad \text{in} \quad D_T.
    \]
\end{proposition}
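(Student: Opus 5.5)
The plan is the standard maximum-principle argument applied to the difference $w=\varphi-\varphi_{\mathrm{sub}}$ (and analogously $\varphi_{\mathrm{sup}}-\varphi$). Since the boundary condition is nonlinear, the argument has to handle the Robin-type boundary inequality. By subtracting the inequalities in Definition~\ref{d:sssol} from the equation~\eqref{PD}, I obtain
\[
\partial_t w \ge \partial_{xx} w \ \text{in } D_T,\qquad w(0,x)\ge 0,
\]
\[
\partial_x w(t,0)\le \mu\bigl(\sin\varphi(t,0)-\sin\varphi_{\mathrm{sub}}(t,0)\bigr)=c(t)\,w(t,0),
\]
where $c(t)=\mu\int_0^1\cos\bigl(\varphi_{\mathrm{sub}}+s w\bigr)(t,0)\,\rmd s$ satisfies $|c(t)|\le\mu$. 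This is a linear problem with bounded boundary coefficient. The goal is therefore to show that $w\ge0$. Note that the boundary condition has the ``wrong'' sign for a naive argument: at $x=0$ the outward normal is $-\partial_x$, and a negative minimum would require $\partial_x w\ge0$ there, whereas $c w<0$.

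First I fix that sign using an exponential weight in time and a spatial weight. Set $w=\rme^{\lambda t}h(x)v$ with $h(x)=\rme^{-\gamma x}$ (or a bounded variant such as $h=1+\rme^{-x}$-type corrections) and $\gamma>\mu$. Then
\[
\partial_x v(t,0)\le (c(t)+\gamma)\,v(t,0),
\]
and since $c+\gamma>0$, a negative boundary minimum of $v$ would force $\partial_x v(t,0)<0$. This is compatible with a minimum at $x=0$ only if $\partial_x v\ge0$, which is a contradiction. In the interior, $v$ satisfies
\[
v_t\ge v_{xx}-2\gamma v_x+(\gamma^2-\lambda)v.
\]
Choosing $\lambda>\gamma^2$ makes the zeroth-order coefficient negative, so at a negative interior minimum $v_t\ge(\gamma^2-\lambda)v>0$, which contradicts the minimum first being reached there. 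To avoid a weight that is unbounded at infinity, I would instead use $h(x)=1+\rme^{-\gamma x}$-like weights or simply $\rme^{-\gamma x}$ combined with the growth assumption below.

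Second, the unbounded domain requires control at $x\to\infty$. I restrict to the class where sub- and supersolutions and the solution have at most exponential (in practice linear) growth in $x$, uniformly on $[0,T]$. For small $\delta>0$, I add a penalty $\delta\,\rme^{\lambda' t}(1+x^2)$ or $\delta\,\rme^{\lambda' t+ x}$ to $v$, which makes it a strict supersolution of the transformed linear inequality and forces the negative part to be supported in a compact set. Then the first negative time is attained at a point in $[0,R]$, and both the interior case and the boundary case lead to contradictions as above. Letting $\delta\to0$ gives $w\ge0$. Running the symmetric argument for $\varphi_{\mathrm{sup}}-\varphi$ gives the upper bound.

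The main obstacle is the penalty step at infinity, specifically making it compatible with the weight $h$ at the boundary. The boundary derivative of the penalty term must not spoil the strict inequality there. If that works, the remaining steps are routine, and this is exactly the content cited from~\cite[Theorem 4.1.1]{PaoNonlinear}, whose hypotheses (Lipschitz nonlinear boundary flux $1+\mu\sin$, with Lipschitz constant $\mu$) are met. Alternatively, I would simply verify those hypotheses and invoke the reference directly.
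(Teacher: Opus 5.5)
Your argument is correct in substance, but it takes a genuinely different route. The paper gives no argument of its own and simply invokes \cite[Theorem 4.1.1]{PaoNonlinear}, which is your closing fallback. You instead prove the comparison directly, in four steps:
\begin{itemize}
\item a mean-value linearization $\partial_x w\le c(t)w$ with $|c|\le\mu$;
\item a decreasing weight with $-h'(0)/h(0)=\gamma>\mu$, so that the effective Robin coefficient $c+\gamma$ is positive;
\item a factor $\rme^{\lambda t}$ that makes the zeroth-order term dissipative;
\item a penalty that localizes minima.
\end{itemize}
The citation buys brevity. Your route makes explicit two points the citation hides. First, only the boundedness of the linearized flux coefficient matters, not its sign. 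Second, a growth class as $x\to\infty$ is indispensable. Without a growth restriction on $\varphi_{\mathrm{sub}}$, the proposition as literally stated fails. Take $g\geq0$, $g\not\equiv0$, Gevrey of some order $s\in(1,2)$ and supported in $(0,T)$, and let $u=\sum_{k\geq0}g^{(k)}(t)\bigl(\tfrac{x^{2k}}{(2k)!}+\mu\tfrac{x^{2k+1}}{(2k+1)!}\bigr)$. This is a Tychonoff-type solution of the heat equation with $u(0,\cdot)=0$, $u(\cdot,0)=g$, $\partial_xu(\cdot,0)=\mu g$. Since $\sin(a+g)-\sin a\leq g$, the function $\varphi+u$ is a subsolution in the sense of Definition~\ref{d:sssol}, yet it exceeds $\varphi$ at $x=0$ wherever $g>0$. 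Your restriction costs the paper nothing: ``the unique solution'' already presupposes such a class, and every function to which the proposition is applied grows at most linearly in $x$.

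Three small corrections.
\begin{itemize}
\item Your remark ``whereas $cw<0$'' has the sign backwards. If $cw<0$ at a negative boundary minimum, the naive argument already gives a contradiction; the bad case is $c(t)\leq0$.
\item The weight $1+\rme^{-x}$ only gives $\gamma=1/2$. For $\mu\geq 1/2$, take $h=1+a\rme^{-bx}$ with $ab/(1+a)>\mu$.
\item The obstacle you flag disappears if the penalty is even in $x$, e.g.\ $p=\rme^{\lambda' t}(1+x^2)$ or $p=\rme^{\lambda' t}\cosh(\kappa x)$. Then $p_x(t,0)=0$, so its contribution to $z_x-(c+\gamma)z$ at $x=0$ is $-\delta(c+\gamma)p(t,0)<0$. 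Also, $p$ is a strict supersolution of the transformed interior inequality once $\lambda'$ is large, because the drift term $-2(h'/h)p_x$ is nonnegative when $h'<0$. By contrast, $p=\rme^{\lambda' t+\kappa x}$ with $\kappa>\gamma$ is the natural choice for dominating $v$ under your unbounded weight $\rme^{-\gamma x}$. It contributes $\delta(\kappa-c-\gamma)p$, which is positive whenever $c\leq0$.
\end{itemize}
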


\medskip\textbf{Supersolutions.}
To find a supersolution, we solve the following boundary value problem
\begin{equation*}
\begin{cases}
    \partial_t \varphi=\partial_{xx} \varphi, & t,x>0,\\
    \partial_x \varphi(t,0)=1-\mu,  &t>0,\\
    \varphi(0,x)=\beta,   &x>0.
\end{cases}
\end{equation*}
The explicit solution is given by
\begin{equation}\label{e:supsoln}
    \varphi_{\mathrm{sup}}(t,x) = \beta+(1-\mu )\,x-\frac{2(1-\mu)\sqrt{t}}{\sqrt{\pi}}e^{-x^2/4t}-(1-\mu)\,x\erf\Big(\frac{x}{2\sqrt{t}}\Big),
\end{equation}
where
$$\erf(\xi) = \frac{2}{\sqrt{\pi}} \int_0^\xi e^{-z^2}\,dz.$$

\begin{proposition}\label{p:sup}
    The function $\varphi_{\mathrm{sup}}$ is a supersolution for~\eqref{PD}, when $\varphi_0=\beta$, in $D_T$ for all $T>0$.
\end{proposition}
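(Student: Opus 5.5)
We check the three conditions of Definition~\ref{d:sssol} for the explicit function~\eqref{e:supsoln}. The key observation is that $\varphi_{\mathrm{sup}}$ is, by construction, an exact solution of the heat equation with the constant Neumann flux $1-\mu$ and initial datum $\beta$. Two of the three conditions therefore hold with equality, and the boundary inequality follows from the elementary bound $1-\mu\leq 1+\mu\sin(\cdot)$.

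\textbf{Step 1 (regularity and the heat equation).} Write $\varphi_{\mathrm{sup}}=\beta+(1-\mu)\,\psi$ with
\[
\psi(t,x)=x-\tfrac{2\sqrt t}{\sqrt\pi}e^{-x^2/4t}-x\erf\big(\tfrac{x}{2\sqrt t}\big).
\]
This function is smooth on $D_T$. A direct differentiation gives
\[
\partial_x\psi=1-\erf\big(\tfrac{x}{2\sqrt t}\big)=\operatorname{erfc}\big(\tfrac{x}{2\sqrt t}\big).
\]
The terms involving $\tfrac{x}{\sqrt{\pi t}}e^{-x^2/4t}$ cancel in this computation, and this cancellation is the only place requiring care. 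It follows that $\partial_{xx}\psi=-\tfrac{1}{\sqrt{\pi t}}e^{-x^2/4t}$. Differentiating in $t$, again with cancellation of the $x\,\partial_t\erf$ term against part of the Gaussian derivative, gives $\partial_t\psi=-\tfrac{1}{\sqrt{\pi t}}e^{-x^2/4t}$. Hence $\partial_t\varphi_{\mathrm{sup}}=\partial_{xx}\varphi_{\mathrm{sup}}$ in $D_T$, so the first inequality holds with equality. Equivalently, one can note that $\partial_x\psi=\operatorname{erfc}(x/2\sqrt t)$ is the classical heat solution on the half line with Dirichlet value $1$ and zero initial data, and recover $\psi$ by integrating in $x$.

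\textbf{Step 2 (initial and boundary data, continuity).} As $t\to0^+$ with $x>0$ fixed, we have $\erf(x/2\sqrt t)\to1$ and $\sqrt t\,e^{-x^2/4t}\to0$. Thus $\psi\to0$, and uniformly on compact subsets of $[0,\infty)$ we have $|\psi|\leq 2\sqrt{t/\pi}+ x\operatorname{erfc}(x/2\sqrt t)\to0$. Consequently $\varphi_{\mathrm{sup}}$ extends continuously to $\overline{D_T}$ with $\varphi_{\mathrm{sup}}(0,x)=\beta=\varphi_0(x)$, which gives the initial inequality with equality. At $x=0$, Step 1 gives $\partial_x\varphi_{\mathrm{sup}}(t,0)=(1-\mu)\operatorname{erfc}(0)=1-\mu$. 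Since $\sin\geq-1$, we get $1-\mu\leq 1+\mu\sin(\varphi_{\mathrm{sup}}(t,0))$ for every $t\in(0,T]$, which is exactly the supersolution boundary condition.

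Nothing here depends on $T$, so $\varphi_{\mathrm{sup}}$ is a supersolution in $D_T$ for every $T>0$. The only mildly delicate points are the derivative cancellations in Step 1 and the continuity at the corner $(0,0)$ in Step 2; both reduce to the explicit bounds above.
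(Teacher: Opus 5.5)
Your proof is correct and follows the same route as the paper. The heat equation and the initial condition hold with equality by construction, and the boundary inequality follows from $\partial_x\varphi_{\mathrm{sup}}(t,0)=1-\mu\leq 1+\mu\sin(\varphi_{\mathrm{sup}}(t,0))$; you simply spell out the derivative cancellations and the continuity up to $t=0$ (including the corner), which the paper dismisses as trivial.
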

\begin{proof}
    By construction, $\varphi_{\mathrm{sup}}$ trivially satisfies the first and third inequalities in~\eqref{d:sup-crit}. The second is satisfied since
    $$\partial_x \varphi_{\mathrm{sup}}(t,0)  = 1-\mu \leq 1+\mu \sin(\varphi_{\mathrm{sup}}(t,0)) \quad \forall \, t>0.$$
    As a consequence, $\varphi_{\mathrm{sup}}$ is a supersolution in $D_T$ for all $T>0$.
\end{proof}
We also note here that
\begin{equation}\label{e:supderiv}
    \partial_x \varphi_{\mathrm{sup}}(t,x) = (1-\mu)\Big(1-\erf\big(\frac{x}{2\sqrt{t}}\big)\Big).
\end{equation}

\section{Boundary snapping and subsolutions}\label{s:4}
Finding a useful subsolution proves to be more difficult. To start, consider the following boundary value problem
\begin{equation}\label{e:rec}
    \begin{cases}
    \partial_t \varphi=\partial_{xx}\varphi, & t,x>0,\\
    \varphi(t,0)=\varphi_*(t), & t>0,\\
    \varphi(0,x)=\varphi_0(x), & x>0.
\end{cases}
\end{equation}
The explicit solution is given by
\begin{equation}\label{e:recsoln}
    \varphi(t,x)=\frac{1}{\sqrt{4\pi t}}\int_0^\infty\bigg(e^{-(x-y)^2/4t}-e^{-(x+y)^2/4t}\bigg)\varphi_0(y) dy+\int_0^t\frac{x}{\sqrt{4\pi s^3}}e^{-x^2/4s}\varphi^*(t-s) ds.
\end{equation}
If we knew the explicit boundary trace for a solution to~\eqref{PD}, then we could substitute it into~\eqref{e:rec} in place of $\varphi_*(t)$ and, by uniqueness, we would obtain the same solution. Thus if we choose $\varphi_*(t)$ so that it sits below the boundary trace, then we would hope to obtain a subsolution to~\eqref{PD}. With this in mind, we consider the following boundary value problem
\begin{equation}\label{e:sub-rec}
    \begin{cases}
    \partial_t \varphi=\partial_{xx} \varphi, & t,x>0,\\
    \varphi(t,0)=\varphi_*^\varepsilon(t), & t>0,\\
    \varphi(0,x)=\beta, & x>0,
\end{cases}
\end{equation}
where
\begin{equation*}
    \varphi_*^\varepsilon(t)=-2\pi \Big\lfloor \frac{1-\mu+\varepsilon}{\pi^{\frac{3}{2}}}\sqrt{t} \Big\rfloor -\frac{\pi}{2}-2\pi M,
\end{equation*}
for $\varepsilon>0$, $M \in \Z_{>0}$, and $\lfloor x \rfloor = \floor(x)=\max\{j\in\Z\,|\,j\leq x\}$. A plot comparing this ansatz with the actual boundary trace is shown in Figure~\ref{f:subsoln}.

The explicit solution to~\eqref{e:sub-rec} is then given by
\begin{equation}\label{e:subsoln}
    \varphi^\varepsilon_{\mathrm{sub}}(t,x) = \varphi^\varepsilon_*(t)+\Big(\frac{\pi}{2} + 2 \pi M + \beta\Big) \erf\Big(\frac{x}{2\sqrt{t}}\Big)+2\pi\sum_{k=1}^\infty\erf\Big(\frac{x}{2\sqrt{t-s_k^\varepsilon}}\Big)\chi_{\{t\geq s_k^\varepsilon\}}(t),
\end{equation}
where $\chi$ is the indicator function and
\begin{equation}\label{e:skdef}
    s_k^\varepsilon = \frac{k^2 \pi^3}{(1-\mu+\varepsilon)^2}.
\end{equation}
We note here that the $s_k^\varepsilon$ are such that if $s_k^\varepsilon \leq t < s_{k+1}^\varepsilon$ then $\Big\lfloor \dfrac{1-\mu+\varepsilon}{\pi^{\frac{3}{2}}}\sqrt{t} \Big\rfloor = k.$
\begin{figure}
    \centering
    \includegraphics[width=0.48\textwidth]{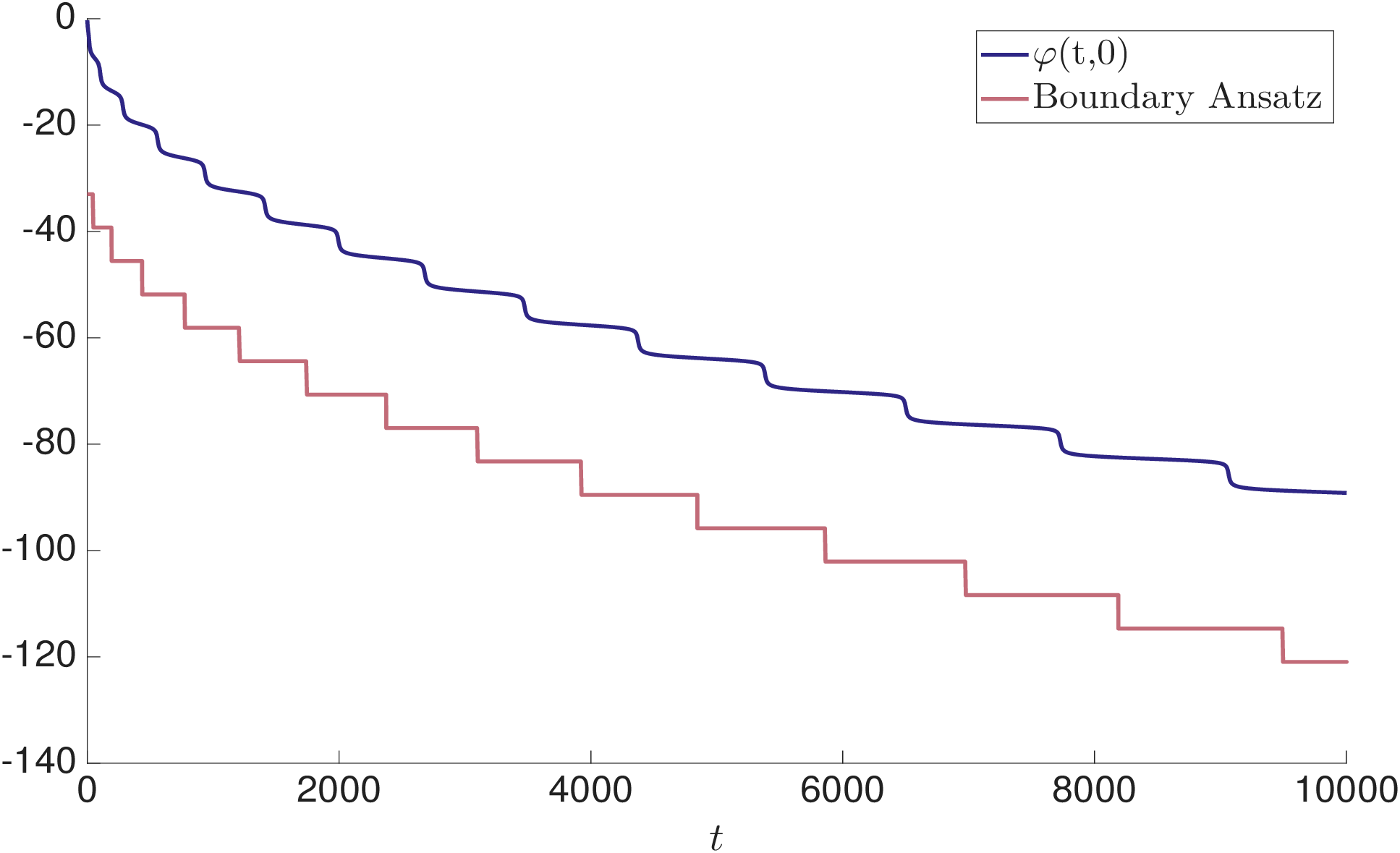} \hfill 
    \includegraphics[width=0.48\textwidth]{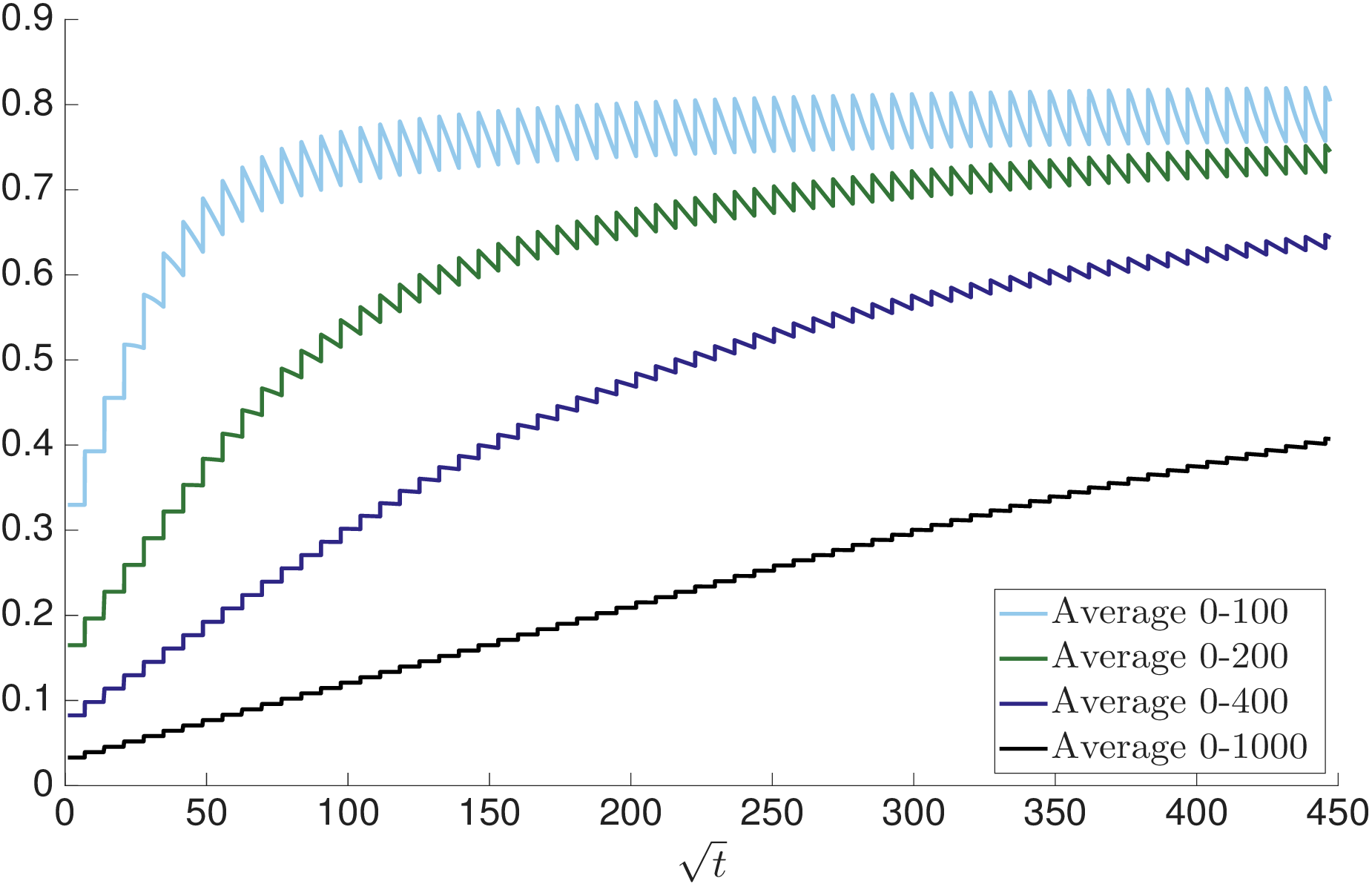}
    \caption{Plots using the subsolution~\eqref{e:subsoln} when $\mu=0.3$, $\varepsilon=0.1$, $M=5$, and $\beta = 0$. The boundary ansatz $\varphi_*^\varepsilon(t)$ is plotted against the boundary trace for a direct numerical simulation of~\eqref{PD} (left) with $\mu=0.3$ and $\varphi_0=0$, motivating the choice of boundary ansatz made. The averages of $\partial_x \varphi^\varepsilon_\mathrm{sub}$ for $x$ in four different intervals are plotted against $\sqrt{t}$ (right). We observe that the subsolution exhibits the same $\sqrt{t}$-periodic, snapping behavior seen in Figure~\ref{f:sh} and~\ref{f:pd}. One difference is that the averages here do not converge to $1-\mu=0.7$. Instead it can be shown, using a similar argument seen in the proof of Lemma~\ref{l:sum}, that they will converge in some sense to $1-\mu +\varepsilon = 0.8$.}
    \label{f:subsoln}
\end{figure}
While this function  mimics  the snapping behavior observed in Figs.~\ref{f:pd} and~\ref{f:subsoln}, the definition of a subsolution  requires continuity of the boundary data at $x=0$, so we cannot immediately apply Proposition~\ref{t:cc}. To rectify this, we replace our ansatz $\varphi_*^\varepsilon$ with a continuous approximation,
\begin{equation}\label{ContAnsatz}
\Tilde{\varphi}_\delta^\varepsilon(t)=\varphi_*^\varepsilon(t)+\tilde{\psi}_\delta^\varepsilon(t),
\end{equation}
where
\begin{equation*}
    \tilde{\psi}_{\delta}^\varepsilon(t)=\begin{cases}
        -\dfrac{2\pi}{\delta}(t-s_k^\varepsilon)+2\pi & s_k^\varepsilon\leq t\leq s_k^\varepsilon+\delta\quad \text{for all }k\in\N, \\
        0 & \text{otherwise,}
    \end{cases}
\end{equation*}
for some $0 < \delta < s_1^\varepsilon$. By construction $\tilde{\varphi}_\delta^\varepsilon$ is continuous and $\varphi_*^\varepsilon(t) \leq \tilde \varphi_\delta^\varepsilon(t)$. Replacing $\varphi_*^\varepsilon$ with $\tilde \varphi_\delta^\varepsilon$ in~\eqref{e:sub-rec}, we can again explicitly solve. We denote the resulting solution in the full domain by $\varphi_\delta^\varepsilon$. 
If $s_n^\varepsilon \leq t \leq s_{n}^\varepsilon + \delta$ then $\varphi_\delta^\varepsilon$ is given by
\begin{align}\label{e:sub-delta1}
    \everymath{\displaystyle}
    \begin{array}{cl}
    \varphi^\varepsilon_\delta(t,x) &= \varphi^\varepsilon_{\mathrm{sub}}(t,x)+\sum_{k=1}^{n-1}\Bigg[-\frac{2x\sqrt{\pi}}{\delta}e^{-x^2/4(t-s_k^\varepsilon-\delta)}\sqrt{t-s_k^\varepsilon-\delta}+\frac{2x\sqrt{\pi}}{\delta}e^{-x^2/4(t-s_k^\varepsilon)}\sqrt{t-s_k^\varepsilon}\\
    &\quad +\bigg(-\frac{2\pi}{\delta}(t-s_k^\varepsilon)+2\pi-\frac{\pi}{\delta}x^2\bigg)\Bigg(\erf\Big(\frac{x}{2\sqrt{t-s_k^\varepsilon-\delta}}\Big)-\erf\Big(\frac{x}{2\sqrt{t-s_k^\varepsilon}}\Big)\Bigg)\Bigg]\\
    &\quad +\frac{2x\sqrt{\pi}}{\delta}e^{-x^2/4(t-s_n^\varepsilon)}\sqrt{t-s_n^\varepsilon}+\bigg(-\frac{2\pi}{\delta}(t-s_n^\varepsilon)+2\pi-\frac{\pi}{\delta}x^2\bigg)\Bigg(1-\erf\Big(\frac{x}{2\sqrt{t-s_n^\varepsilon}}\Big)\Bigg).
    \end{array}
\end{align}
For $s_n^\varepsilon + \delta \leq t < s_{n+1}^\varepsilon$, it is explicitly given by
\begin{align}\label{e:sub-delta2}
    \everymath{\displaystyle}
    \begin{array}{cl}
    \varphi^\varepsilon_\delta(t,x) &= \varphi^\varepsilon_{\mathrm{sub}}(t,x)+\sum_{k=1}^{n}\Bigg[-\frac{2x\sqrt{\pi}}{\delta}e^{-x^2/4(t-s_k^\varepsilon-\delta)}\sqrt{t-s_k^\varepsilon-\delta}+\frac{2x\sqrt{\pi}}{\delta}e^{-x^2/4(t-s_k^\varepsilon)}\sqrt{t-s_k^\varepsilon}\\
    &\quad +\bigg(-\frac{2\pi}{\delta}(t-s_k^\varepsilon)+2\pi-\frac{\pi}{\delta}x^2\bigg)\Bigg(\erf\Big(\frac{x}{2\sqrt{t-s_k^\varepsilon-\delta}}\Big)-\erf\Big(\frac{x}{2\sqrt{t-s_k^\varepsilon}}\Big)\Bigg)\Bigg].
    \end{array}
\end{align}
Before we continue with the proof, note that by~\eqref{e:recsoln} and the fact that $\varphi_*^\varepsilon(t) \leq \tilde \varphi_\delta^\varepsilon(t)$, we immediately can conclude that
\begin{equation*}
    \varphi^\varepsilon_{\mathrm{sub}}(t,x) \leq \varphi_\delta^\varepsilon(t,x) \quad \forall \,\, t,x\geq0.
\end{equation*}
Therefore,  if we can show that $\varphi_\delta^\varepsilon$ is a subsolution, we also have that for $\varphi$ the solution to~\eqref{PD}, 
\begin{equation*}
    \varphi^\varepsilon_{\mathrm{sub}}(t,x) \leq \varphi(t,x) \quad \forall \,\, t,x\geq0.
\end{equation*} 
To prove that $\varphi_\delta^\varepsilon(t,x)$ is a subsolution, we will first need the following lemma.
\begin{lemma}\label{l:sum}
    For all $\varepsilon>0$, we have with $s_k^\varepsilon$ from~\eqref{e:skdef} that
    \begin{equation*}
        \lim_{n \to \infty} 2\sqrt{\pi}\sum_{k=1}^{n}\frac{1}{\sqrt{s_{n+1}^\eps-s_k^\eps}} = 1-\mu+\varepsilon.
    \end{equation*}
\end{lemma}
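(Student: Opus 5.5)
Substituting the explicit form~\eqref{e:skdef} of the snapping times reduces the claim to a Riemann-sum computation. Write $c=1-\mu+\varepsilon$, so that $s_k^\varepsilon=k^2\pi^3/c^2$ and
\[
s_{n+1}^\varepsilon-s_k^\varepsilon=\frac{\pi^3}{c^2}\big((n+1)^2-k^2\big).
\]
Then
\[
2\sqrt{\pi}\sum_{k=1}^{n}\frac{1}{\sqrt{s_{n+1}^\varepsilon-s_k^\varepsilon}}=\frac{2c}{\pi}\sum_{k=1}^{n}\frac{1}{\sqrt{(n+1)^2-k^2}},
\]
so it suffices to show that $S_n:=\sum_{k=1}^{n}\big((n+1)^2-k^2\big)^{-1/2}\to \pi/2$.

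Setting $N=n+1$, we can write
\[
S_n=\frac{1}{N}\sum_{k=1}^{N-1}\frac{1}{\sqrt{1-(k/N)^2}}.
\]
This is a right-endpoint Riemann sum for $\int_0^1(1-u^2)^{-1/2}\,\rmd u=\arcsin(1)=\pi/2$, with the last subinterval $[(N-1)/N,1]$ omitted. The only issue is that the integrand $f(u)=(1-u^2)^{-1/2}$ is unbounded at $u=1$. I would handle this with monotonicity: $f$ is increasing on $[0,1)$, so for $1\le k\le N-1$,
\[
\int_{(k-1)/N}^{k/N}f\le \frac1N f(k/N)\le \int_{k/N}^{(k+1)/N}f.
\]
Summing over $k$ gives
\[
\int_0^{(N-1)/N}f\;\le\; S_n\;\le\;\int_{1/N}^{1}f,
\]
and the right-hand integral is finite because $f$ is integrable at $1$.

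Both bounds converge to $\pi/2$ as $N\to\infty$, since $\arcsin((N-1)/N)\to\pi/2$ and $\arcsin(1)-\arcsin(1/N)\to\pi/2$. Hence $S_n\to\pi/2$, and the original sum converges to $\frac{2c}{\pi}\cdot\frac{\pi}{2}=c=1-\mu+\varepsilon$.

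The step needing the most care is this control near the singular endpoint $u=1$. The monotone sandwich above disposes of it without any further estimates, and it is the only nonroutine point in the argument.
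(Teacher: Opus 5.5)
Your proof is correct and follows essentially the same route as the paper. You substitute $s_k^\varepsilon=k^2\pi^3/(1-\mu+\varepsilon)^2$ to get the Riemann sum $\frac{2(1-\mu+\varepsilon)}{\pi}\cdot\frac{1}{n+1}\sum_{k=1}^n f\big(\tfrac{k}{n+1}\big)$ with $f(\sigma)=(1-\sigma^2)^{-1/2}$, then sandwich it between integrals using that $f$ is increasing. Your bound $\int_0^{(N-1)/N}f\le S_n\le\int_{1/N}^1 f$, via the convergent improper integral at $\sigma=1$, is a slightly tidier packaging of the paper's bound by $\int_{\sigma_1}^{\sigma_n}f$ plus the vanishing endpoint terms $f(\sigma_1)\Delta\sigma$ and $f(\sigma_n)\Delta\sigma$ (which the paper writes with the two endpoint terms interchanged).
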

\begin{proof}
    To start, define $\sigma_k = \sqrt{\dfrac{s_k^\varepsilon}{s_{n+1}^\varepsilon}}$ and $\Delta \sigma = \sigma_{k+1}-\sigma_k = \dfrac{1}{n+1}$. Then we can factor the sum to get
    \begin{equation}\label{e:limst}
        2\sqrt{\pi}\sum_{k=1}^{n}\frac{1}{\sqrt{s_{n+1}^\eps-s_k^\eps}} = \dfrac{2(1-\mu+\varepsilon)}{\pi} \sum_{k=1}^n \dfrac{1}{\sqrt{1-\sigma_k^2}} \Delta \sigma.
    \end{equation}
    Define $f(\sigma)=\dfrac{1}{\sqrt{1-\sigma^2}}.$ This function is non-negative, increasing in $\sigma$, and Riemann integrable on $[0,a]$ for all $a<1$, with
    \begin{equation*}
        \lim_{a\to1^-}\int_0^a\frac{1}{\sqrt{1-\sigma^2}}d\sigma=\frac{\pi}{2}.
    \end{equation*}
    Since $f$ is increasing, we have that
    \begin{equation*}
        f(\sigma_k)\Delta\sigma\le\int_{\sigma_k}^{\sigma_{k+1}}\frac{1}{\sqrt{1-\sigma^2}}d\sigma\le f(\sigma_{k+1})\Delta\sigma.
    \end{equation*}
    Summing over $k$ gives
    \begin{equation*}
        \sum_{k=1}^{n-1}f(\sigma_k)\Delta\sigma\le\int_{\sigma_1}^{\sigma_{n}}\frac{1}{\sqrt{1-\sigma^2}}d\sigma\le \sum_{k=2}^nf(\sigma_{k})\Delta\sigma.
    \end{equation*}
    Then, since $f(\sigma)$ is positive, we find that
    \begin{equation*}
        \int_{\sigma_1}^{\sigma_{n}}\frac{1}{\sqrt{1-\sigma^2}}d\sigma+f(\sigma_n)\Delta\sigma\le \sum_{k=1}^{n}f(\sigma_k)\Delta\sigma\le\int_{\sigma_1}^{\sigma_{n}}\frac{1}{\sqrt{1-\sigma^2}}d\sigma+f(\sigma_1)\Delta\sigma.
    \end{equation*}
    Note that by continuity,
    \begin{equation*}
        \lim_{n\to\infty}f(\sigma_1)\Delta\sigma=\lim_{n\to\infty}f\bigg(\frac{1}{n+1}\bigg)\cdot\frac{1}{n+1}=0,
    \end{equation*}
    and similarly
    \begin{equation*}
        \lim_{n\to\infty}f(\sigma_n)\Delta\sigma=\lim_{n\to\infty}f\Big(\frac{n}{n+1}\Big)\cdot \frac{1}{n+1}=\lim_{n\to\infty}\frac{1}{\sqrt{2n+1}}=0.
    \end{equation*}
    Therefore, the squeeze theorem gives that
    \begin{equation*}
    \lim_{n\to\infty}\sum_{k=1}^nf(\sigma_k)\Delta\sigma=\lim_{n\to\infty}\int_{\sigma_1}^{\sigma_n}f(\sigma)d\sigma=\int_0^1f(\sigma)d\sigma=\frac{\pi}{2}.
    \end{equation*}
    Substituting this into our expression for the limit~\eqref{e:limst} above yields the desired result:
    $$\lim_{n \to \infty} \frac{2(1-\mu+\varepsilon)}{\pi}\sum_{k=1}^n\frac{1}{\sqrt{1-\sigma_k^2}}\Delta \sigma = 1-\mu+\varepsilon.$$
\end{proof}
Using this lemma, we are now able to prove the following result.
\begin{proposition}\label{p:sub-delta}
    For any $\varepsilon>0$, $M=M_\varepsilon$ sufficiently large and $\delta$ sufficiently small, the function $\varphi_\delta^\varepsilon$ from~\eqref{ContAnsatz} is a subsolution for~\eqref{PD}, when $\varphi_0 = \beta$, in $D_T$ for all $T>0$.
\end{proposition}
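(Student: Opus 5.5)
We check the three reversed inequalities of Definition~\ref{d:sssol} for $\varphi_\delta^\varepsilon$. The interior inequality $\partial_t\varphi_\delta^\varepsilon\le\partial_{xx}\varphi_\delta^\varepsilon$ holds with equality, since $\varphi_\delta^\varepsilon$ solves the heat equation with the continuous Dirichlet data $\tilde\varphi_\delta^\varepsilon$. The initial inequality $\varphi_\delta^\varepsilon(0,x)\le\beta$ is immediate. Continuity on $\overline{D_T}$ holds because $\tilde\varphi^\varepsilon_\delta(0^+)=-\pi/2-2\pi M$ and the initial data is the constant $\beta$; the corner mismatch only affects $x=0$, $t=0$ and is allowed for $C(\overline{D_T})$ data once $\tilde\varphi_\delta^\varepsilon$ is extended suitably. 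Alternatively, one can add a short linear ramp near $t=0$, which only lowers the data.

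The real work is the boundary inequality
\[
\partial_x\varphi_\delta^\varepsilon(t,0)\ \ge\ 1+\mu\sin\big(\tilde\varphi_\delta^\varepsilon(t)\big),\qquad t>0.
\]
The plan is to compute $\partial_x\varphi_\delta^\varepsilon(t,0)$ from \eqref{e:subsoln}, \eqref{e:sub-delta1} and \eqref{e:sub-delta2}. Using $\partial_x\erf(x/2\sqrt\tau)|_{x=0}=1/\sqrt{\pi\tau}$, we get for $s_n^\varepsilon+\delta\le t<s_{n+1}^\varepsilon$
\[
\partial_x\varphi_\delta^\varepsilon(t,0)=\frac{\pi/2+2\pi M+\beta}{\sqrt{\pi t}}+\sum_{k=1}^n \frac{2\pi}{\delta}\int_{t-s_k^\varepsilon-\delta}^{t-s_k^\varepsilon}\frac{d\tau}{\sqrt{\pi\tau}}\ \ge\ \frac{\pi/2+2\pi M+\beta}{\sqrt{\pi t}}+\sum_{k=1}^n\frac{2\sqrt\pi}{\sqrt{t-s_k^\varepsilon}} .
\]
The last step uses the fact that the ramp terms are averages of $2\sqrt{\pi}/\sqrt{\tau}$, which is decreasing in $\tau$. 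Equivalently, the Neumann trace is the half-line Dirichlet-to-Neumann operator $\pi^{-1/2}\int(\text{jump})/\sqrt{t-s}$ applied to the increasing parts of the data. A similar formula holds on $[s_n^\varepsilon,s_n^\varepsilon+\delta]$, with an extra positive term from the active ramp.

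Since $|\sin|\le1$, it suffices to show that the right-hand side is at least $1+\mu$ on the ramps, and that it is at least $1-\mu$ off the ramps. Off the ramps, $\tilde\varphi_\delta^\varepsilon\equiv-\pi/2\pmod{2\pi}$, so there $\sin=-1$. On the ramps the phase sweeps through a full period, so we need the bound $1+\mu$.

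Off the ramps, for $t\in[s_n^\varepsilon+\delta,s_{n+1}^\varepsilon)$, we have $\sum_{k\le n}2\sqrt\pi/\sqrt{t-s_k^\varepsilon}\ge\sum_{k\le n}2\sqrt\pi/\sqrt{s_{n+1}^\varepsilon-s_k^\varepsilon}$. By Lemma~\ref{l:sum} this tends to $1-\mu+\varepsilon$, so it exceeds $1-\mu$ for $n\ge N_\varepsilon$. For the finitely many $n<N_\varepsilon$, i.e. $t<s^\varepsilon_{N_\varepsilon}$, the term $(\pi/2+2\pi M+\beta)/\sqrt{\pi t}\ge(2\pi M+\pi/2+\beta)/\sqrt{\pi s^\varepsilon_{N_\varepsilon}}$ dominates once $M=M_\varepsilon$ is large. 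This is where the choice of $M$ enters, and it must be made uniformly in $\delta$.

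On the ramps, $t\in[s_n^\varepsilon,s_n^\varepsilon+\delta]$, the active ramp contributes $\frac{2\pi}{\delta}\int_0^{t-s_n^\varepsilon}\frac{d\tau}{\sqrt{\pi\tau}}$. This is $\frac{4\sqrt\pi}{\delta}\sqrt{t-s_n^\varepsilon}$, which is large except very close to $s_n^\varepsilon$. In that initial window the previous terms still give roughly $1-\mu+\varepsilon$ together with the dominant singular term from the $k=n-1$ jump. I expect this to be the main obstacle: making the bound $\ge1+\mu$ hold uniformly near the start of each ramp, where the phase is still near $-\pi/2$ so that $\sin\approx-1$. I would handle it by splitting the ramp. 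For $t-s_n^\varepsilon\le c\delta$ the phase is within $O(c)$ of $-\pi/2$, so $1+\mu\sin\le1-\mu+O(c^2)$. There the already established bound $1-\mu+\varepsilon/2$ plus the positive active-ramp term suffices, once $c$ is chosen small relative to $\varepsilon$. For $t-s_n^\varepsilon\ge c\delta$, the active term is at least $4\sqrt{\pi c}/\sqrt\delta\ge1+\mu$ once $\delta$ is small. This fixes the order of choices: $\varepsilon$, then $N_\varepsilon$ and $M_\varepsilon$, then $c$, then $\delta$.

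Comparison (Proposition~\ref{t:cc}) then applies in every $D_T$.
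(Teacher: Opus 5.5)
Your proposal is correct and follows the paper's skeleton. It uses the same Neumann trace at $x=0$: your averaging of $2\sqrt{\pi}/\sqrt{\tau}$ is exactly the paper's bound $4\sqrt{\pi}/(\sqrt{t-s_k^\varepsilon}+\sqrt{t-s_k^\varepsilon-\delta})\ge 2\sqrt{\pi}/\sqrt{s_{n+1}^\varepsilon-s_k^\varepsilon}$. It also uses Lemma~\ref{l:sum} for large $n$, together with its $k\le n-1$ variant on the ramps, which you, like the paper, assert rather than derive. And $M_\varepsilon$ is chosen independently of $\delta$ to cover $t\le s^\varepsilon_N$.

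The genuine difference is on the ramps $t\in[s_n^\varepsilon,s_n^\varepsilon+\delta]$; write $y=t-s_n^\varepsilon$.
\begin{itemize}
\item The paper shows that $D(y)=\frac{\varepsilon}{2}+\frac{4\sqrt{\pi}}{\delta}\sqrt{y}-\mu\bigl(1-\cos\frac{2\pi y}{\delta}\bigr)$ is increasing on $[0,\delta]$ whenever $\delta<1/\pi$, since $D'(y)=0$ would force $\mu\sqrt{\pi y}\sin(2\pi y/\delta)=1$.
\item You instead split at $y=c\delta$. For small $y$ you use the slack $\varepsilon/2$ together with $1-\cos\theta\le\theta^2/2$, which needs $c\le\sqrt{\varepsilon}/(2\pi\sqrt{\mu})$. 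For larger $y$ you use the active-ramp term alone, which needs $\delta\le 16\pi c/(1+\mu)^2$.
\end{itemize}
Your version is a bit more flexible. It only uses that the required slope exceeds $1-\mu$ by $O((y/\delta)^2)$ while the ramp contribution grows like $\sqrt{y}/\delta$. The price is that $\delta$ depends on $\varepsilon$. The paper's monotonicity argument gives the explicit threshold $\delta<1/\pi$, uniform in $\varepsilon$ and $\mu$, and does not even need the slack $\varepsilon/2$ on the ramps. Either way, the order of choices ($\varepsilon$, then $N$ and $M$, then $\delta$) fits the statement.

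A few side remarks, none load-bearing:
\begin{itemize}
\item The positive Neumann contributions come from the \emph{decreasing} parts of the Dirichlet data, not the increasing ones.
\item With continuous ramps there is no singular term from the $k=n-1$ jump near $s_n^\varepsilon$; that term is only $O(n^{-1/2})$.
\item The paper passes over the corner mismatch at $(0,0)$ silently, and your linear-in-time ramp does not repair it. Going from $\beta$ down to $-\pi/2-2\pi M$ raises the data rather than lowering it. It also makes $\partial_x\varphi(t,0)=O(\sqrt{t})\to 0$ as $t\to0^+$, which violates the boundary inequality there.
\item A simpler fix for the corner: near $(0,0)$, $\varphi_\delta^\varepsilon$ takes values in $[-\pi/2-2\pi M,\beta]$, so $\limsup(\varphi_\delta^\varepsilon-\varphi)\le 0$ there. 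That suffices for the usual maximum-principle proof of comparison. Alternatively, lower the initial data near $x=0$, which is permitted for a subsolution, and let that modification shrink.
\end{itemize}
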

\begin{proof}
    As in the supersolution case, the first and third reverse inequalities of~\eqref{d:sup-crit} are trivially satisfied. So we will focus on showing the reverse of the second, which is given by
    \begin{equation*}
        \partial_x\varphi^\eps_\delta(t,0)\ge1+\mu\sin(\varphi_*^\eps(t,0)+\tilde \psi_\delta^\eps(t,0)).
    \end{equation*}
    Using the angle addition formula and the fact that $\sin(\varphi_*^\varepsilon(t))=-1$ for all $t>0$, the above inequality reduces to
    \begin{equation}\label{e:cc}
            \partial_x\varphi^\eps_\delta(t,0)\ge1-\mu\cos(\tilde{\psi}_\delta^\eps(t)).
    \end{equation}
    Now we split into two cases. First suppose that $t \in [s_n^\varepsilon+\delta, s_{n+1}^\varepsilon)$ for some $n \in \N$. Then $\tilde \psi_\delta^\varepsilon(t) = 0$ and so~\eqref{e:cc} reduces to
    \begin{equation*}
        \partial_x\varphi^\eps_\delta(t,0)\ge1-\mu.
    \end{equation*}
    Thus all we need to show is that the derivative is greater than the constant $1-\mu$. To start, it follows from~\eqref{e:sub-delta2} that
    \begin{equation*}
        \partial_x\varphi^\eps_\delta(t,0)=
        \Big(\frac{\pi}{2}+2\pi M +\beta \Big)\frac{1}{\sqrt{\pi t}}+4\sqrt{\pi}\sum_{k=1}^{n}\frac{1}{\sqrt{t-s_k^\eps}+\sqrt{t-s_k^\eps-\delta}}.
    \end{equation*}
    We can immediately bound the right hand side by below to obtain
    \begin{equation*}
        \partial_x\varphi^\eps_\delta(t,0)\ge \Big(\frac{\pi}{2}+2\pi M+\beta\Big)\frac{1}{\sqrt{\pi t}}+2\sqrt{\pi}\sum_{k=1}^{n}\frac{1}{\sqrt{s_{n+1}^\eps-s_k^\eps}}.
    \end{equation*}
    By Lemma~\ref{l:sum}, there exists an $N_1 \in N$ such that
    \begin{equation*}
        2\sqrt{\pi}\sum_{k=1}^{n}\frac{1}{\sqrt{s_{n+1}^\eps-s_k^\eps}}> 1-\mu+\frac{\eps}{2},
    \end{equation*}
    for all $n \geq N_1$.
    Thus~\eqref{e:cc} holds for all $s_n^\varepsilon +\delta \leq t < s_{n+1}^\varepsilon$ for $n \geq N_1$. So if we can guarantee that
    $$\Big(\frac{\pi}{2}+2\pi M+\beta\Big)\frac{1}{\sqrt{\pi t}}\geq1-\mu+\frac{\varepsilon}{2},$$ 
    for all $t \leq s_{N_1}^\varepsilon$, we are done. To do so, we choose
    $$M \geq M_1 \coloneq \dfrac{(1-\mu+\frac{\varepsilon}{2})\sqrt{\pi s_{N_1}^\varepsilon}-\frac{\pi}{2}-\beta}{2\pi}=\dfrac{\sqrt{\pi^4 N_1^2 }-\frac{\pi}{2}-\beta}{2\pi}=\dfrac{\pi^2 N_1-\frac{\pi}{2}-\beta}{2\pi}.$$

    Next suppose that $t\in[s_n^\eps,s_{n}^\eps+\delta]$ for some $n \in \N$. Then~\eqref{e:sub-delta1} reduces to 
    \begin{equation*}
        \partial_x\varphi^\eps_\delta(t,0)=\Big(\frac{\pi}{2}+2\pi M +\beta \Big)\frac{1}{\sqrt{\pi t}}+4\sqrt{\pi}\sum_{k=1}^{n-1}\frac{1}{\sqrt{t-s_k^\eps}+\sqrt{t-s_k^\eps-\delta}}+\frac{4\sqrt{\pi}}{\delta}\sqrt{t-s_n^\eps}.
    \end{equation*}
    We can again bound this expression from below by 
    \begin{equation*}
        \partial_x\varphi^\eps_\delta(t,0)\ge\Big(\frac{\pi}{2}+2\pi M+\beta\Big)\frac{1}{\sqrt{\pi t}}+2\sqrt{\pi}\sum_{k=1}^{n-1}\frac{1}{\sqrt{s_{n+1}^\eps-s_k^\eps}}+\frac{4\sqrt{\pi}}{\delta}\sqrt{t-s_n^\eps}.
    \end{equation*}
    The sum is quite close to the sum from Lemma~\ref{l:sum}, and a near identical argument shows that this sum converges to $1-\mu+\eps$. So there exists an $N_2\in \mathbb{N}$ such that 
    \begin{equation*}
        2\sqrt{\pi}\sum_{k=1}^{n-1}\frac{1}{\sqrt{s_{n+1}^\eps-s_k^\eps}}\ge 1-\mu+\frac{\varepsilon}{2},
    \end{equation*}
    for all $n\ge N_2$. By the same argument, if we take
    $$M \geq M_2 \coloneq \dfrac{\pi^2 N_2-\frac{\pi}{2}-\beta}{2\pi},$$
    then
    $$\partial_x\varphi_\delta^\varepsilon(t,0) \geq 1-\mu +\frac{\varepsilon}{2} + \frac{4 \sqrt{\pi}}{\delta} \sqrt{t-s_n^\varepsilon},$$
    for all $s_n^\varepsilon \leq t \leq s_{n}^\varepsilon + \delta$ and $n \in \N$. We note here that by choosing such an $M$ we also have that~\eqref{e:cc} also holds for all $s_0^\varepsilon = 0 < t < s_1^\varepsilon$. Therefore in order to show that~\eqref{e:cc} holds, we must show that
    $$1-\mu +\frac{\varepsilon}{2} + \frac{4 \sqrt{\pi}}{\delta} \sqrt{t-s_n^\varepsilon} \geq 1- \mu\cos(\tilde{\psi}^\varepsilon_\delta(t)) \quad \Longleftrightarrow \quad \frac{\varepsilon}{2} + \frac{4 \sqrt{\pi}}{\delta} \sqrt{t-s_n^\varepsilon} - \mu \Big(1-\cos\Big(\frac{2 \pi}{\delta} (t-s_n^\varepsilon)\Big)\Big) \geq 0, $$
    for all $s_n^\varepsilon \leq t \leq s_{n}^\varepsilon + \delta$ and $n \in \N$. For ease of notation let $y = t-s_n^\varepsilon$, then this reduces to
    $$D(y)=\frac{\varepsilon}{2} + \frac{4 \sqrt{\pi}}{\delta} \sqrt{y} - \mu \Big(1-\cos\Big(\frac{2 \pi}{\delta} y\Big)\Big) \geq 0,$$
    for all $0 \leq y \leq \delta$. Note that $D(0)=\dfrac{\eps}{2}>0$, so if $D$ never changes sign then the above inequality holds for all $y\in[0,\delta]$. With this in mind, we will show that $D$ is an increasing function. Taking a derivative we find that
    \begin{equation*}
        D'(y)=\frac{2\sqrt{\pi}}{\delta\sqrt{y}}-\frac{2\mu\pi}{\delta}\sin\Big(\frac{2\pi y}{\delta}\Big)=\frac{2\sqrt{\pi}}{\delta}\bigg(\frac{1}{\sqrt{y}}-\mu\sqrt{\pi}\sin\Big(\frac{2\pi y}{\delta}\Big)\bigg).
    \end{equation*}
    Note that $D'\big(\dfrac{\delta}{2}\big) = \dfrac{\sqrt{8\pi}}{\delta^{3/2}} >0$, and
    \begin{equation}\label{e:delta-cond}
        D'(y)=0 \quad \iff \quad 1=\mu\sqrt{\pi y}\sin\Big(\frac{2\pi y}{\delta}\Big).
    \end{equation}
    Then since $\mu \in (0,1)$ it follows that
    $$\mu\sqrt{\pi y}\sin\Big(\frac{2\pi y}{\delta}\Big) \leq \sqrt{\pi y} \leq \sqrt{\pi \delta},$$
    and so if $\delta < \frac{1}{\pi}$, then~\eqref{e:delta-cond} can never hold. Thus taking $\delta < \frac{1}{\pi}$ gives that $D'(y) > 0$ for all $0 \leq y \leq \delta$. Then by the above discussion for such a $\delta$ and $M$,~\eqref{e:cc} holds for all $s_n^\varepsilon \leq t \leq s_n^\varepsilon+\delta$ and $n \in \N$.
        
    Therefore if we take $M= M_\varepsilon \coloneq \max\{\lceil M_1 \rceil,\lceil M_2 \rceil , 0\}$ and $\delta<\frac{1}{\pi}$, it follows that~\eqref{e:cc} holds for all $t > 0$, and so $\varphi^\varepsilon_\delta$ is a subsolution to the solution of~\eqref{PD} in $D_T$ for all $T>0$.
\end{proof}

Therefore by the above discussion for all $\varepsilon>0$ we also have the desired inequality for the discontinuous $\varphi^\varepsilon_{\mathrm{sub}}$, so long as we choose $M_\varepsilon$ as stated above, even though Proposition~\ref{t:cc} cannot be applied directly.

\section{From boundary data to slope estimates --- proof of Theorem~\ref{t:main}}\label{s:5}
Using the sub- and supersolutions from \S\ref{s:3} and \S\ref{s:4} that bound the actual solution pointwise, we wish to obtain gradient estimates on the actual solution. For this,  we rely on a somewhat elementary lemma.  
First, note that by Proposition~\ref{t:cc} we have
\begin{align*}
    |\varphi_{\mathrm{sup}}(t,0)-\varphi(t,0)|\leq&|\varphi_{\mathrm{sup}}(t,0)-\varphi^\eps_{\mathrm{sub}}(t,0)|\\
    =&\big| \beta-\frac{2(1-\mu)\sqrt{t}}{\sqrt{\pi}}+2\pi\bigg\lfloor\frac{1-\mu+\eps}{\pi^{3/2}}\sqrt t\bigg\rfloor+\frac{\pi}{2}+2\pi M_\eps \big|\\
    \leq&\frac{2\eps}{\sqrt{\pi}}\sqrt{t}+|\beta|+\frac{\pi}{2}+2\pi M_\eps.
\end{align*}
Therefore by the above inequality and linearity, $\varphi_{\mathrm{sup}}-\varphi$ satisfies
\begin{equation}\label{e:bound}
    \begin{cases}
        \partial_t(\varphi_{\mathrm{sup}}-\varphi) = \partial_{xx} (\varphi_{\mathrm{sup}}-\varphi), & t,x>0,\\
        \varphi_{\mathrm{sup}}(t,0)-\varphi(t,0)=A(t), & t>0,\\
        \varphi_{\mathrm{sup}}(0,x)-\varphi(0,x)=0, & x>0,
    \end{cases}
\end{equation}
with $|A(t)|\lesssim\eps\sqrt{t}+M_\eps$. This motivates the following lemma.
\begin{lemma}\label{l:bound}
    Let $\psi$ be a solution to 
    \begin{equation}\label{e:lemma}
        \begin{cases}
        \partial_t \psi=\partial_{xx} \psi, & t,x>0,\\
        \psi(t,0)=a(t), & t>0,\\
        \psi(0,x)=0, & x>0.
    \end{cases}
    \end{equation}
    \begin{enumerate}
        \item If $|a(t)|\leq\sqrt{t},$ then there exists $C>0$ such that $|\partial_x\psi(t,\bar x\sqrt{t})|\leq \dfrac{C}{\bar{x}}$ for each $\bar x>0, t>0.$
        \item If $|a(t)|\leq1,$ then there exists $D>0$ such that $|\partial_x\psi(t,\bar x\sqrt{t})|\leq \dfrac{D}{\bar x\sqrt{t}}$ for each $\bar x>0, t>0.$
    \end{enumerate}
\end{lemma}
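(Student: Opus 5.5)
We start from the explicit representation~\eqref{e:recsoln} with zero initial data:
\[
\psi(t,x)=\int_0^t \frac{x}{\sqrt{4\pi s^3}}\,e^{-x^2/4s}\,a(t-s)\,\rmd s .
\]
Differentiating in $x$ under the integral is legitimate for $x>0$, because the kernel and its $x$-derivative are smooth and decay rapidly as $s\to0^+$. This gives
\[
\partial_x\psi(t,x)=\int_0^t K(s,x)\,a(t-s)\,\rmd s,\qquad K(s,x)=\frac{1}{\sqrt{4\pi s^3}}\Big(1-\frac{x^2}{2s}\Big)e^{-x^2/4s}.
\]
We bound crudely by $|\partial_x\psi|\le\int_0^t|K(s,x)|\,|a(t-s)|\,\rmd s$. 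In both parts we then set $x=\bar x\sqrt t$ and rescale $s=t\tau$, so that everything reduces to a bound on a one-variable integral in $\tau\in(0,1)$ depending on the parameter $\bar x$.

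\textbf{Part 2.} Using $|a|\le1$, we get $|\partial_x\psi|\le\int_0^\infty|K(s,x)|\,\rmd s$. Substitute $r=x^2/4s$, so $s=x^2/(4r)$ and $\rmd s=x^2/(4r^2)\,\rmd r$. Then $s^{-3/2}=8r^{3/2}/x^3$, and
\[
\int_0^\infty|K(s,x)|\,\rmd s=\frac{1}{\sqrt{4\pi}}\cdot\frac{8}{x^3}\cdot\frac{x^2}{4}\int_0^\infty r^{-1/2}\,|1-2r|\,e^{-r}\,\rmd r=\frac{c}{x}.
\]
The constant $c$ is finite, since the integrand is integrable at $0$ (like $r^{-1/2}$) and at $\infty$. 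With $x=\bar x\sqrt t$ this is exactly $D/(\bar x\sqrt t)$.

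\textbf{Part 1.} Now $|a(t-s)|\le\sqrt{t-s}\le\sqrt t$, so the same computation gives $|\partial_x\psi(t,x)|\le \sqrt t\cdot c/x = c/\bar x$, and we may take $C=c$. Keeping the sharper factor $\sqrt{t-s}$ is unnecessary for the claimed bound.

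The only steps needing care are two technical points. First, we must justify differentiating~\eqref{e:recsoln} under the integral sign, or equivalently verify that the representation solves~\eqref{e:lemma} for merely bounded, measurable (possibly discontinuous) boundary data $a$. This matters because we intend to apply the lemma with $A(t)$, which involves the floor function; for $x>0$ fixed the kernel and its derivative are integrable uniformly, so dominated convergence suffices. Second, we must confirm the finiteness of $\int_0^\infty r^{-1/2}|1-2r|e^{-r}\,\rmd r$, which is elementary.

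Afterwards the lemma will be applied to $\psi=\varphi_{\mathrm{sup}}-\varphi$, with $A$ split into a part bounded by $\lesssim\eps\sqrt t$ and a part bounded by $M_\eps$, using linearity. At $\bar x=\sqrt\eps$, the first part gives $\eps/\sqrt\eps=\sqrt\eps$ and the second part gives $M_\eps/\sqrt{\eps t}\to0$. Combined with~\eqref{e:supderiv}, where $\erf(\sqrt\eps/2)=\rmO(\sqrt\eps)$, this yields Theorem~\ref{t:main}.
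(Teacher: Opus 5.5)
Your proof is correct and follows the paper's route: the boundary-integral representation~\eqref{e:recsoln}, differentiation under the integral, and an $L^1$ bound on the differentiated kernel weighted by $|a|$. The only difference is computational. You extend the $s$-integral to $(0,\infty)$, use the scaling $r=x^2/4s$ to get $c/x$ (in fact $c=\sqrt{8/(\pi e)}$), and reduce (i) to (ii) via $\sqrt{t-s}\le\sqrt t$. The paper instead evaluates the integrals over $(0,t)$ in closed form, and your shortcut sidesteps a harmless sign slip there: the closed form for (i) should read $\frac{2\sqrt t}{x}e^{-x^2/4t}-\frac{\sqrt\pi}{2}\big(1-\erf\big(\frac{x}{2\sqrt t}\big)\big)$.
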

\begin{proof}
    First recall the boundary integral reconstruction~\eqref{e:recsoln}, which gives
    \begin{equation*}
        \psi(t,x)=\int_0^t\frac{x}{\sqrt{4\pi s^3}}e^{-x^2/4s}a(t-s) ds.
    \end{equation*}
    Taking derivatives and absolute values then yields
    \begin{align*}
    |\partial_x\psi(t,x)|\leq & \Bigg|\partial_x\int_0^t\frac{x}{\sqrt{4\pi s^3}}e^{-x^2/4s}a(t-s) ds\Bigg|\\
    \leq &\int_0^t\Bigg|\partial_x\bigg(\frac{x}{\sqrt{4\pi s^3}}e^{-x^2/4s}\bigg)a(t-s) \Bigg|ds\\
    =&\int_0^t\Bigg|\frac{1}{\sqrt{4\pi s^3}}-\frac{x^2}{4\sqrt{\pi s^5}}\Bigg|e^{-x^2/4s}|a(t-s)| ds.
    \end{align*}
    We now consider case (i) from the statement, first.
    Suppose $|a(t)|\leq\sqrt{t}$ for all $t>0.$ 
    Then applying the triangle inequality yields
    \begin{equation*}
        |\partial_x\psi(t,x)|\leq\int_0^t\Bigg(\frac{1}{\sqrt{4\pi s^3}}+\frac{x^2}{4\sqrt{\pi s^5}}\Bigg)e^{-x^2/4s}\sqrt{t-s} \ ds = \frac{2\sqrt{t}}{x}e^{-x^2/4t}+\frac{\sqrt{\pi}}{2}\Big(1-\erf\Big(\frac{x}{2\sqrt{t}}\Big)\Big).
    \end{equation*}
    Writing $x=\bar x\sqrt{t},$ we can find $C>0$ such that 
    \begin{equation*}
        |\partial_x\psi(t,\bar x\sqrt{t})|\leq \frac{2}{\bar x}e^{- \bar x^2/4}+\frac{\sqrt{\pi}}{2}\Big(1-\erf\Big(\frac{\bar x}{2}\Big)\Big)\leq\frac{C}{\bar x}.
    \end{equation*}
    We next turn to case (ii). Suppose $|a(t)|\leq1$ for all $t>0.$ We use a similar argument, arriving at the inequality
    \begin{equation*}
        |\partial_x\psi(t,x)|\leq\int_0^t\Bigg(\frac{1}{\sqrt{4\pi s^3}}+\frac{x^2}{4\sqrt{\pi s^5}}\Bigg)e^{-x^2/4s} \ ds=\frac{1}{\sqrt{\pi t}}e^{-x^2/4t}+\frac{2}{x}\Big(1-\erf\Big(\frac{x}{2\sqrt{t}}\Big)\Big).
    \end{equation*}
    Writing $x=\bar x\sqrt{t},$ we can find $D>0$ such that 
    \begin{equation*}
        |\partial_x\psi(t,\bar x\sqrt{t})|\leq\frac{1}{\sqrt{t}}\Bigg(\frac{1}{\sqrt{\pi}}e^{-\bar x^2 /4}+\frac{2}{\bar x}\Big(1-\erf\Big(\frac{\bar x}{2}\Big)\Big)\Bigg)\leq\frac{D}{\bar x\sqrt{t}}.
    \end{equation*}
    This concludes the proof.
\end{proof}
We can now turn to the proof of our main result, first in the case of zero initial slope.
\begin{proof}[Proof of Theorem~\ref{t:main}, $\alpha=0$]
    Let $\eps>0$. If we take $a(t) = \varphi_{\mathrm{sup}}(t,0)-\varphi(t,0)$ in~\eqref{e:lemma} then 
    $$\psi(t,x) = \varphi_{\mathrm{sup}}(t,x) - \varphi(t,x),$$
    is a solution to~\eqref{e:lemma}. Using the estimate on $A$ in~\eqref{e:bound}, Lemma~\ref{l:bound}, and~\eqref{e:supderiv}, we have that
    \begin{equation*}
        \Big|\partial_x\big[\varphi_{\mathrm{sup}}(t,\bar x\sqrt{t})-\varphi(t,\bar x\sqrt{t})\big]\Big|=\Big|\partial_x\varphi(t,\bar x\sqrt{t})-(1-\mu)\Big(1-\erf\bigg(\frac{\bar x}{2}\bigg)\Big)\Big|\lesssim  \bigg(\frac{\eps}{\bar{x}}+\frac{M_\eps}{\bar x \sqrt{t}}\bigg).
    \end{equation*}
    Now choose $\bar x=\sqrt \eps$ to find
    \begin{equation*}
        \Big|\partial_x\varphi(t,\sqrt{\eps t})-(1-\mu)\Big(1-\erf\bigg(\frac{\sqrt{\eps}}{2}\bigg)\Big)\Big|\lesssim \frac{\eps}{\sqrt{\eps}}+\frac{M_\eps}{\sqrt{\eps t}}=\sqrt{\eps}+\frac{M_\eps}{\sqrt{\eps t}}.
    \end{equation*}
    We choose $T>0$ sufficiently large such that for each $t>T$,
        $\frac{M_\eps}{\sqrt{t}}<\eps$ and hence $\frac{M_\eps}{\sqrt{\eps t}}<\sqrt{\eps}$.
    Then
    \begin{equation*}
        \Big|\partial_x\varphi(t,\bar x\sqrt{t})-(1-\mu)\Big(1-\erf\bigg(\frac{\sqrt{\eps}}{2}\bigg)\Big)\Big|\lesssim \sqrt{\varepsilon}.
    \end{equation*}
    With the reverse triangle inequality, we find 
    \begin{equation*}
        \Bigg|\Big|\partial_x\varphi(t,\bar x\sqrt{t})-(1-\mu)\Big| -\Big|(1-\mu)\Big(\erf\bigg(\frac{\sqrt{\eps}}{2}\bigg)\Big)\Big|\Bigg|\leq \Bigg|\partial_x\varphi(t,\bar x\sqrt{t})-(1-\mu)\Big(1-\erf\bigg(\frac{\sqrt{\eps}}{2}\bigg)\Big)\Big|\lesssim\sqrt{\eps}.
    \end{equation*}
    Since $\erf(z)>0$ when $z>0$, we can remove the absolute values on the second term. Then Taylor expanding the error function at zero, we arrive at 
    \begin{equation*}
        -\sqrt{\eps}\lesssim\Big|\partial_x\varphi(t,\sqrt{\eps t})-(1-\mu)\Big| -\Big((1-\mu)\Big(\frac{\sqrt{\eps}}{\sqrt{\pi}}+\mathcal O(\eps)\Big)\Big)\lesssim\sqrt{\eps}.
    \end{equation*}
    Therefore adding the second term across yields
    \begin{equation*}
        -\sqrt{\eps}+\mathcal O(\eps)\lesssim\Big|\partial_x\varphi(t,\sqrt{\eps t})-(1-\mu)\Big|\lesssim\sqrt{\eps}+\mathcal{O}(\eps),
    \end{equation*}
    and so
    \begin{equation*}
        \Big|\partial_x\varphi(t,\sqrt{\eps t})-(1-\mu)\Big|\lesssim\sqrt{\eps}.
    \end{equation*}
\end{proof}
It remains to adapt this proof to more general initial conditions.
\begin{proof}[Proof of Theorem~\ref{t:main}]
    The general case can be reduced to the specific case $\alpha=0$ by changing variables, scalings, and some elementary estimates. First suppose $\alpha < 1- \mu$. We define
    \[
    \varphi(t,x) = \bar \varphi(t,x) + \alpha x.
    \]
    Plugging this into~\eqref{PD}, we find that $\bar \varphi$ satisfies 
    \[
    \partial_t \bar \varphi = \partial_{xx} \bar \varphi, \quad \partial_x \bar \varphi(t,0) = 1-\alpha + \mu \sin(\bar \varphi(t,0)), \quad \text{and} \quad \bar \varphi(0,x)=\beta.
    \]
    Next, let
    $$x = \gamma \xi, \quad t = \gamma^2 \tau, \quad \text{and} \quad \tilde \varphi(\tau, \xi) = \bar \varphi(t,x).$$
    Then $\tilde \varphi$ satisfies
    \begin{equation}\label{e:tildesub}
        \partial_{\tau} \tilde \varphi = \partial_{\xi\xi} \tilde \varphi, \quad \partial_{\xi} \tilde \varphi(\tau,0) = \gamma(1-\alpha) + \gamma \mu \sin(\tilde \varphi(\tau,0)), \quad \text{and} \quad \tilde \varphi(0,\xi) = \beta.
    \end{equation}
    By choosing $\gamma = \dfrac{1}{1-\alpha}$, the boundary condition at $\xi=0$ reduces to
    $$\partial_{\xi} \tilde \varphi(\tau,0) = 1 + \frac{\mu}{1-\alpha} \sin(\tilde \varphi(\tau,0)).$$
    Since $\alpha < 1-\mu$, it follows that
    $$\tilde \mu = \frac{\mu}{1-\alpha} \in (0,1),$$
    and so this is simply the equation we studied above. So by Theorem~\ref{t:main}, with $\alpha=0$, it follows that for all $\varepsilon>0$, there exists a $T>0$ such that
    \[
        \Big\rvert\partial_{\xi}\tilde \varphi(\tau, \sqrt{\varepsilon \tau}\,) - (1-\tilde \mu)\Big|\lesssim \sqrt{\varepsilon}  \quad \forall \, t > T.
    \]
    Working backwards, $\xi = \sqrt{\varepsilon \tau}$ implies $x = \sqrt{\varepsilon t}$. So we readily conclude that
    \begin{align*}
        \Big\rvert \frac{1}{1-\alpha} \partial_{x}\bar \varphi(t, \sqrt{\varepsilon t}\,) - (1-\tilde\mu)\Big|\lesssim\sqrt{\varepsilon}  & \implies \Big\rvert \partial_{x}\bar \varphi(t, \sqrt{\varepsilon t}\,) - (1-\alpha)(1-\tilde\mu)\Big|\lesssim\sqrt{\varepsilon}\\
        & \implies \Big\rvert \partial_{x}\bar \varphi(t, \sqrt{\varepsilon t}\,) - (1-\alpha - \mu)\Big|\lesssim\sqrt{\varepsilon}.
    \end{align*}
    Finally undoing the last substitution we find that
    $$ \Big| \partial_{x}  \varphi(t, \sqrt{\varepsilon t}\,) - \alpha - (1-\alpha-\mu)\Big| = \Big| \partial_{x}  \varphi(t, \sqrt{\varepsilon t}\,) - (1-\mu)\Big| \lesssim\sqrt{\varepsilon} \quad \forall \, t>T,$$
    as desired.

    For the case $\alpha > 1+ \mu$, we start by setting
    $\varphi(t,x) = - \psi(t,x)$.
    Substituting this into~\eqref{PD} gives us
    $$\partial_t \psi = \partial_{xx} \psi, \quad \partial_x \psi(t,0) = -1- \mu \sin(-\psi(t,0)), \quad \text{and} \quad \bar \psi(0,x)=-\alpha x.$$
    Using that sine is odd, we reduce the boundary condition to 
    $$\partial_x \psi(t,0) = -1 + \mu \sin(\psi(t,0)).$$
    With the substitution
    $\psi(t,x) = \bar \psi(t,x) - \alpha x,$ and
    $x=\gamma \xi$, $ t = \gamma^2 \tau$, $\bar \psi(t,x) = \tilde \psi(\tau, \xi)$,
    we find that
    \begin{equation}\label{tilde2eqn}
        \partial_{\tau} \tilde \psi = \partial_{\xi\xi} \tilde \psi, \quad \partial_{\xi} \tilde \psi(\tau,0) = \gamma(\alpha-1) + \gamma \mu \sin(\tilde \psi(\tau,0)), \quad \text{and} \quad \tilde \psi(0,\xi) = \beta.
    \end{equation}
    Then since $\alpha > 1+\mu$, it follows that
    $$\tilde \mu = \frac{\mu}{\alpha -1} \in (0,1),$$
    just like above. Then again~\eqref{tilde2eqn} is exactly of the form we studied before, so Theorem~\ref{t:main} for $\alpha=0$ gives us that for all $\varepsilon>0$, there exists a $T>0$ such that
    $$\Big\rvert\partial_{\xi}\tilde \psi(\tau, \sqrt{\varepsilon \tau}\,) - (1-\tilde \mu)\Big|\lesssim \sqrt{\varepsilon} \quad \forall \, t > T.$$
    Then working backwards just as we did above yields that
    $$\Big| \partial_{x}  \varphi(t, \sqrt{\varepsilon t}\,) - (1+\mu)\Big| \lesssim\sqrt{\varepsilon}  \quad \forall \, t>T,$$
    as desired.
\end{proof}

\section{Discussion and outlook}\label{s:dis}

In summary, we showed that for a large class of initial  conditions to the phase diffusion approximation~\eqref{PD}, the derivative of solutions will be well approximated by  the marginally stable slope in a thin  parabolic wedge inside space-time. This echoes the numerical observation of convergence of average wavenumbers in large regions. We mention here some open questions.

\medskip\textbf{Alternative statements on convergence of slope.}
A different kind of convergence was illustrated in Figure~\ref{f:subsoln}, right panel, where averages of the limiting slope are plotted as length of average and time go to infinity.  Understanding convergence of these averages, one would start by examining the functions $\varphi_{\mathrm{sup}}$ and $\varphi^\varepsilon_{\mathrm{sub}}$ from~\eqref{e:supsoln} and~\eqref{e:subsoln} respectively. One finds that
\begin{equation*}
    \lim_{t \to \infty} \dfrac{\varphi_{\mathrm{sup}}(t,L)-\varphi_{\mathrm{sup}}(t,0)}{L} = 1-\mu,
\end{equation*}
and similarly for any $\varepsilon>0$
\begin{equation}\label{e:limsupinf}
    \liminf_{t \to \infty} \dfrac{\varphi^\varepsilon_{\mathrm{sub}}(t,L)-\varphi^\varepsilon_{\mathrm{sub}}(t,0)}{L} = 1-\mu+\varepsilon \quad \text{and} \quad \lim_{L \to \infty} \limsup_{t \to \infty} \dfrac{\varphi^\varepsilon_{\mathrm{sub}}(t,L)-\varphi^\varepsilon_{\mathrm{sub}}(t,0)}{L} = 1-\mu+\varepsilon.
\end{equation}
The latter can be established following the proof of Lemma~\ref{l:sum}, although taking the limit when applying squeeze theorem is significantly more involved. Then, using these  limits for sub- and supersolutions and similar ideas to those seen in the proofs of Lemma~\ref{l:bound} and Theorem~\ref{t:main}, one can show that for any $\varepsilon>0$ there exists a $T>0$ such that 
\begin{equation*}
    \bigg| \frac{\varphi(t,\sqrt{\eps t}) - \varphi(t,0)}{\sqrt{\eps t}} - (1-\mu) \bigg| \lesssim \sqrt{\varepsilon} \quad \text{for all } t>T.
\end{equation*}
However, showing that the uncoupled limits in~\eqref{e:limsupinf} hold for the actual solution with $\eps=0$, appears to be more difficult.


\medskip\textbf{Growth and marginal stability.} 
In many applied scenarios, the mismatch between wavenumber in the bulk and selected wavenumbers at the boundary results from growth, isotropic (that is, uniform stretching within the domain) or apical (the domain grows through adding at the boundary). The system presented here was analyzed in~\cite{PhysRevE.94.022219}, finding that slow growth selects wavenumber close to the extremal boundary, with universal corrections proportional to the square root of the speed of growth; see also~\cite{MR4646017}. Phenomenologically, the domain growth stretches patterns until the wavenumber is too small, becoming incompatible with the boundary condition, at which point snapping events adjust the wavenumber. The results there only provided formal asymptotics for the universal wavenumber corrections. It would be interesting to explore if subsolutions of the type used here could help establish asymptotics rigorously. 

\medskip\textbf{Bounded domains.} In a bounded domain, equilibria are isolated and generically hyperbolic. When considering the equation with values $\varphi\in \R/(2\pi\Z)$, the set of equilibria is compact, and so is the set of bounded solutions, sometimes referred to as the global attractor. The equilibria can be obtained somewhat explicitly, or equivalently through a somewhat intuitive geometric construction outlined in~\cite[\S2.6]{MorisseyScheel}, which also would give stability.  More refined information on the set of equilibria, particularly a permutation induced by the ordering on equilibria induced by a shooting construction, can give information on the structure of the global attractor; see~\cite{MR4710809} and references therein. It would be interesting to identify the role of the ``marginal'' equilibria in this finite-domain setting, particularly their location inside the attractor. 

\medskip\textbf{Fragility and statistics.} 
Self-organized criticality is of particular interest in connection with the statistics of systems when subject to noise. As we pointed out, the limiting state with critical slope is unstable and small perturbations will induce snapping. However, memory in the system, through the diffusive field, will make repeated snapping more difficult. We hope that the simple system analyzed here will be amenable to a more rigorous analysis of the influence of stochastic perturbations.

\medskip\textbf{Snapping beyond the phase-diffusion approximation.} 
We reemphasize that our methods are restricted to the phase diffusion approximation, although similar phenomena can be observed  in more realistic systems. We demonstrated the convergence in the Swift-Hohenberg equation, Figure~\ref{f:sh}, but suspect that other systems will exhibit similar phenomena; see~\cite{PhysRevE.94.022219} for applications to reaction-diffusion systems, the complex Ginzburg-Landau, and the Cahn-Hilliard equation. While these systems can be approximated by phase-diffusion dynamics in a slowly varying situation, snapping events intrinsically violate this assumption. On the other hand, comparison principles that allowed us to control snapping with rough $L^\infty$ bounds are not available in these contexts and more refined estimates on gluing snapping with slow drift along equilibria would likely be needed.

\medskip\textbf{Wavenumber selection.} 
The phenomena presented here can also be viewed in a broader context of pattern selection; see~\cite{MR5016800} for a recent review. The role of boundaries here can also be played by defects, which at times select energy minimizing wavenumbers that in turn also are marginally stable against a wrinkling instability~\cite{MR3603778}. We already mentioned domain growth~\cite{MR4646017}  and the related front invasion~\cite{avery2026selectionmechanismsinvasion} scenario as alternate selection mechanisms. Both can be understood as akin to sources of wave trains in non-equilibrium systems, such as the prominent spiral waves~\cite{MR4580296} or one-dimensional sources~\cite{MR2067899}. Both can be seen as marginally stable limiting cases of wave sinks, akin to Lax shocks, which would be observed with well-prepared initial conditions at spatial infinity. 

\bibliographystyle{abbrv} 
\bibliography{References}  

\end{document}